\documentclass[10pt,sigconf,letterpaper,nonacm]{acmart}

\usepackage[english]{babel}
\usepackage{blindtext}
\usepackage{float}
\usepackage{enumitem}
\usepackage{fancyhdr}
\usepackage{algorithm}
\usepackage{float}
\usepackage{amsmath, amsthm}
\usepackage{graphicx}
\usepackage{subcaption}
\usepackage{booktabs}
\usepackage{microtype}
\usepackage{algpseudocode}
\usepackage{amsmath}
\usepackage{microtype}
\newtheorem{proposition}{Proposition}
\usepackage{array}
\usepackage{booktabs}
\renewcommand\footnotetextcopyrightpermission[1]{} 
\setcopyright{none}

\acmDOI{}

\acmISBN{}

\acmPrice{}

\begin{document}
\emergencystretch=2em
\title{Rethinking Traffic Matrix Completion: Estimate the Process, Not the Entries}


\author{Xiyuan Liu}
\authornote{Xiyuan Liu and Zihao Wang contributed equally to this work.}
\affiliation{%
  \institution{Xidian University}
  \city{Xi'an}
  \country{China}}

\author{Zihao Wang}
\authornotemark[1]
\email{zihaowang23@stu.xidian.edu.cn}
\affiliation{%
  \institution{Xidian University}
  \city{Xi'an}
  \country{China}}

\author{Guanzuo Liu}
\email{guanzuoliu@stu.xidian.edu.cn}
\affiliation{%
  \institution{Xidian University}
  \city{Xi'an}
  \country{China}}

\author{Xiucheng Tian}
\email{xiuchengtian@stu.xidian.edu.cn}
\affiliation{%
  \institution{Xidian University}
  \city{Xi'an}
  \country{China}}

\author{Wenting Wei}
\authornote{Corresponding author.}
\email{wtwei@xidian.edu.cn}
\affiliation{%
  \institution{Xidian University}
  \city{Xi'an}
  \country{China}}

\renewcommand{\shortauthors}{Liu, Wang, et al.}
\begin{abstract}
Traffic matrix measurement is fundamental for datacenter operations, but obtaining complete traffic matrices at scale remains challenging due to the prohibitive cost of global fine-grained measurement and partial observations resulting from network faults. Although existing matrix completion methods (reduce cost) achieve satisfactory performance in specific scenarios, their reliance on  restrictive assumptions or  black-box mappings  results in a lack of interpretability and an inability to characterize uncertainty.
In this paper, we propose Utimac, an uncertainty-aware traffic matrix completion for data center networks. Our analysis shows that, within a locally stationary window, log-domain traffic can be decomposed into a principal statistical component and a sparse deviation component. Based on this insight, we formulate traffic matrix completion as a parameter inference problem: multiple partially observed frames within a window are used to infer shared parameters and recover missing entries.  To avoid the intractability and boundary degeneracy of the original integral-form marginal likelihood, we construct a regularized surrogate objective and solve the resulting joint optimization problem with block coordinate descent.
Utimac consistently outperforms all baselines on data center networks datasets in both overall and burst scenarios, with its advantage becoming more pronounced as observations grow sparser.
All code is publicly available in an anonymous repository: https://anonymous.4open.science/r/Utimac-0551/
\end{abstract}

\keywords{Traffic Matrix Completion, Uncertainty-aware}
\maketitle

\section{Introduction}
In recent years, datacenter networks have been supporting increasingly large-scale workloads \cite{xia2016survey}, and applications such as distributed training, LLM online inference, and large-scale data processing have further intensified network communication \cite{gangidi2024rdma,qian2024alibaba}. As these workloads continue to expand, accurate measurement of  network traffic becomes important \cite{canel2024understanding}, because it provides critical data support for downstream tasks such as traffic engineering, fault localization, and anomaly detection \cite{benson2010network}.

Obtaining a complete traffic matrix in a large-scale datacenter is challenging. As the network grows, exhaustively measuring all source-destination pairs requires devices to maintain large flow tables, which continuously consume the limited memory and processing resources of switches \cite{li2016flowradar}. Meanwhile, the available observations are often incomplete \cite{gursun2012traffic}. Fine-grained monitoring is typically enabled only at selected locations and during selected periods, and the collected statistics can further become incomplete due to reporting delays, record loss, or device failures \cite{zhang2017high}.

Existing datacenter traffic measurement methods can be divided into direct measurement and indirect measurement. Direct measurement obtains traffic statistics by collecting packets or flows from network devices or monitoring points; representative examples include flow-recording approaches such as NetFlow/IPFIX and systems like OpenTM \cite{claise2004cisco,claise2013specification,tootoonchian2010opentm}, which directly measure traffic matrices using counters in switch flow tables. In addition, sketch-based designs such as UnivMon and ElasticSketch maintain compact summaries to approximately record traffic features with lower memory overhead \cite{liu2016one,yang2018elastic}. 
Indirect measurement infers the global traffic matrix from partial observations. Some of the works rely on explicit priors such as low-rankness or sparsity, as illustrated by Xie et al.’s studies on variable-rate measurements \cite{xie2018accurate}, block matrix completion \cite{xie2021low}, and deep adversarial tensor completion \cite{xie2023deep}. While some adopt data-driven modeling and learn the mapping from partial observations to the full traffic matrix, with representative examples including AutoTomo and Satformer \cite{qiao2024autotomo,qin2024satformer}.

Unfortunately, existing methods still have several gaps. For direct measurement, the main limitation is that the measurement overhead grows with network scale and measurement granularity\cite{ghabashneh2022microscopic}. For example, flow-record and per-flow counting methods require devices to maintain a large amount of flow-level tables \cite{yu2013software}, while sampling and path-counting methods introduce extra overhead for path maintenance or statistics aggregation \cite{sekar2008csamp}. As a result, direct measurement often leads to high resource and management costs in large-scale data center networks.
In contrast, indirect measurement reduces the cost of explicit monitoring, but its performance often depends on strong assumptions or the training data. Methods based on priors such as low rankness rely on strong assumptions \cite{roughan2011spatio}, and their performance degrades when real traffic deviates from these assumptions. Deep learning based methods can learn complex mappings, but they usually lack interpretability and are sensitive to distribution shifts, which limits their generalization. Most existing indirect measurement methods focus on point estimation and cannot provide confidence ranges for the traffic matrix. However, traffic matrix imputation is inherently an inference problem under incomplete observations, and its outputs therefore contain uncertainty. Without explicitly characterizing this uncertainty, downstream systems cannot identify high-risk predictions and may be misled in subsequent decision-making \cite{mei2023uncertainty}.

We propose Utimac, an traffic matrix completion method that mind the gaps of existing methods: the high cost of direct measurement, the limited interpretability of black-box inference, and the lack of uncertainty characterization in point estimation. Our data analysis shows that, within a locally stationary time window, log-domain traffic can be described by the superposition of a joint Gaussian principal component and a Laplacian deviation component. Based on this statistical structure, we formulate traffic completion as a parameter inference problem driven by partial observations: we infer the shared mean, shared covariance, and sparsity parameter from multiple partially observed frames within a window and then recover the unobserved traffic entries accordingly. 

Our contributions are summarized as follows:
\begin{itemize}
    \item We reformulate traffic matrix completion as a statistical inference problem. Within each locally stationary window, we represent log-domain traffic by a joint Gaussian principal component and a Laplacian deviation component, and tie multiple partially observed frames to a shared set of parameters. Completion is then carried out by inferring the shared mean, covariance, and sparsity parameter from partial observations and recovering the missing entries.
    
    \item We derive a computable and well-posed inference objective from the integral-form marginal likelihood. To avoid high-dimensional integration and eliminate the degeneracy caused by sparsity growth and covariance collapse, we introduce a surrogate objective together with stabilizing regularizers, which yields a regularized nested optimization model.
    
    \item We develop a structured solver for the regularized problem. We rewrite it as a joint optimization over deviation variables, mean, sparsity, and covariance, and solve it by block coordinate descent. This procedure alternates between frame-wise deviation updates, mean estimation, sparsity update, and covariance optimization, and finally recovers the missing traffic entries together with their uncertainty.
\end{itemize}

Utimac consistently outperforms all baselines on data center networks datasets in both overall and burst scenarios, with its advantage becoming more pronounced as observations grow sparser.

\section{Problem Statement}

This section establishes the foundation of our formulation. We first introduce the system model and assumptions, and then formalize the traffic matrix completion problem under partial observations.

\subsection{System Model and Assumptions}
\label{subsec:system_model}

We model data center traffic in discrete time. At the selected network layer, let
$\{u_1,u_2,\dots,u_M\}$ denote the source-side network units and
$\{v_1,v_2,\dots,v_N\}$ denote the destination-side network units.
For each source-destination pair $(u_i,v_j)$, let $X_t(i,j)\in\mathbb{R}_{+}$ be the traffic random variable at time $t$, and let $x_t(i,j)$ be one realization of $X_t(i,j)$.

By vectorizing all source-destination flows at time $t$ in a fixed order, we obtain the traffic random vector:
\begin{equation}
\begin{aligned}
X_t
&=
\begin{bmatrix}
X_t(s_1) & X_t(s_2) & \cdots & X_t(s_d)
\end{bmatrix}^{\top}
\in\mathbb{R}_{+}^{d}, d=MN,
\end{aligned}
\label{eq:Xt_def}
\end{equation}
where $S=\{s_1,s_2,\dots,s_d\}$ is the spatial index set after vectorization, and each index $s_k$ corresponds to one unique source-destination pair. The corresponding realization is denoted by $x_t\in\mathbb{R}_{+}^{d}$.

We adopt a locally stationary model. Let the time axis be partitioned into local windows. $n$-th window is denoted by:
\begin{equation}
W_n=\{t_{n,1},t_{n,2},\dots,t_{n,L_n}\},
\label{eq:window_def}
\end{equation}
where $L_n=|W_n|$. We assume that the distribution parameters of the traffic process remain approximately constant within the same window $W_n$, and may change across different windows.

To account for non-negativity and heavy-tailed behavior, we work in the log domain. Let $\varepsilon\in\mathbb{R}_{++}^{d}$ be a strictly positive offset vector. We define
\begin{equation}
Z_t=\log(X_t+\varepsilon),
\label{eq:Zt_def}
\end{equation}
where the logarithm is applied elementwise.

We decompose the log-domain traffic into a structured component and a deviation component:
\begin{equation}
Z_t=U_t+O_t,\qquad t\in W_n.
\label{eq:decomp}
\end{equation}

Here, $U_t\in\mathbb{R}^{d}$ represents the dominant statistical structure, and $O_t\in\mathbb{R}^{d}$ represents local deviations.

Later empirical analysis shows that, within a local stationary window, the dominant part of the log-domain traffic exhibits an approximately joint Gaussian pattern, while the deviation part is sparse. Based on that empirical finding, we use the following parametric model in window $W_n$:
\begin{equation}
U_t\sim\mathcal{N}(\mu_n,\Sigma_n),
\qquad t\in W_n,
\label{eq:Ut_gaussian}
\end{equation}
and
\begin{equation}
p_O(o;\lambda_n)
=
\left(\frac{\lambda_n}{2}\right)^d
\exp\!\left(-\lambda_n\|o\|_1\right),
\label{eq:Ot_laplace}
\end{equation}
where $\mu_n\in\mathbb{R}^{d}$ is the shared mean vector, $\Sigma_n\in\mathbb{R}^{d\times d}$ is the shared covariance matrix with $\Sigma_n\succ 0$, and $\lambda_n>0$ is the shared sparsity parameter. We use
\begin{equation}
\theta_n=(\mu_n,\Sigma_n,\lambda_n)
\label{eq:theta_def}
\end{equation}
to denote the shared statistical parameters in window $W_n$.

We further assume conditional independence within each window. Given $\theta_n$, the pairs $\{(U_t,O_t):t\in W_n\}$ are independent across time. For each $t\in W_n$, $U_t$ and $O_t$ are independent. The joint density of the latent variables at time $t$ is therefore
\begin{equation}
\begin{aligned}
p_{U,O}(u_t,o_t;\theta_n)
&=
p_U(u_t;\mu_n,\Sigma_n)\,
p_O(o_t;\lambda_n) \\
&\propto
\exp\!\left(
-\frac{1}{2}(u_t-\mu_n)^{\top}\Sigma_n^{-1}(u_t-\mu_n)
-\lambda_n\|o_t\|_1
\right).
\end{aligned}
\label{eq:joint_latent_density}
\end{equation}

For each time $t$, let $b_t\in\{0,1\}^{d}$ be the binary observation mask. The observed index set is
\begin{equation}
\Omega_t=\{s_k\in S:\,b_t(s_k)=1\}.
\label{eq:Omega_t_def}
\end{equation}

The actual observation at time $t$ is the subvector of $x_t$ on $\Omega_t$, denoted by $x_t^{(\Omega_t)}\in\mathbb{R}_{+}^{|\Omega_t|}$.

We treat $\Omega_t$ as exogenous and known during parameter estimation. All subsequent inference is conditioned on $\Omega_t$. Under this setting, the main task in each local window is to estimate the shared parameter $\theta_n$ from partial observations and then characterize the conditional distribution of the complete traffic vector.

\subsection{Problem Formulation}
\label{subsec:problem_formulation}

The main goal is to estimate the shared parameter $\theta_n$ in each local stationary window. Once $\theta_n$ is estimated, the distribution of the complete traffic vector in that window is specified. Given current observations, the unobserved entries are characterized by their conditional posterior distribution.

For any time $t\in W_n$, define the observed log-domain subvector
\begin{equation}
z_t^{(\Omega_t)}
=
\log\!\left(x_t^{(\Omega_t)}+\varepsilon_{\Omega_t}\right)
\in\mathbb{R}^{|\Omega_t|},
\label{eq:zt_obs_def}
\end{equation}
where $\varepsilon_{\Omega_t}$ is the subvector of $\varepsilon$ indexed by $\Omega_t$. By \eqref{eq:decomp}, we have
\begin{equation}
z_t^{(\Omega_t)}
=
u_t^{(\Omega_t)}+o_t^{(\Omega_t)}.
\label{eq:obs_decomp}
\end{equation}

Since $U_t\sim\mathcal{N}(\mu_n,\Sigma_n)$, its marginal distribution on any observed index set remains Gaussian:
\begin{equation}
U_t^{(\Omega_t)}
\sim
\mathcal{N}\!\left(
\mu_n^{(\Omega_t)},
\Sigma_n^{(\Omega_t,\Omega_t)}
\right),
\label{eq:Ut_obs_gaussian}
\end{equation}
where $\mu_n^{(\Omega_t)}$ is the subvector of $\mu_n$ on $\Omega_t$, and $\Sigma_n^{(\Omega_t,\Omega_t)}$ is the principal submatrix of $\Sigma_n$ indexed by $\Omega_t$. The deviation term on the same subspace has density
\begin{equation}
p_O\!\left(o_t^{(\Omega_t)};\lambda_n\right)
=
\left(\frac{\lambda_n}{2}\right)^{|\Omega_t|}
\exp\!\left(-\lambda_n\|o_t^{(\Omega_t)}\|_1\right).
\label{eq:Ot_obs_density}
\end{equation}

Given $\theta_n$, the exact conditional likelihood of the observed log-domain vector is obtained by marginalizing out the latent deviation variable:
\begin{equation}
\begin{aligned}
p\!\left(z_t^{(\Omega_t)}\mid \Omega_t;\theta_n\right)
=
\int_{\mathbb{R}^{|\Omega_t|}}
&p_U\!\left(
z_t^{(\Omega_t)}-o;
\mu_n^{(\Omega_t)},
\Sigma_n^{(\Omega_t,\Omega_t)}
\right) \\
&\cdot p_O(o;\lambda_n)\,do .
\end{aligned}
\label{eq:single_frame_marginal}
\end{equation}

By conditional independence within $W_n$, the principled observed-data maximum likelihood estimator is
\begin{equation}
\begin{aligned}
\hat{\theta}_n^{\mathrm{ML}}
=
\arg\max_{\mu_n,\;\Sigma_n\succ 0,\;\lambda_n>0}
\sum_{t\in W_n}
\log p\!\left(
z_t^{(\Omega_t)}\mid
\Omega_t;
\mu_n,\Sigma_n,\lambda_n
\right).
\end{aligned}
\label{eq:window_exact_ml}
\end{equation}

The objective in \eqref{eq:window_exact_ml} is statistically well defined. Its direct evaluation is expensive because each frame requires the computation of a high-dimensional convolution integral. For a general non-diagonal covariance matrix $\Sigma_n^{(\Omega_t,\Omega_t)}$, the integral in \eqref{eq:single_frame_marginal} does not admit a simple closed form.

To obtain a computable objective, we introduce a profiled approximation. Using the identity
\begin{equation}
u_t^{(\Omega_t)}=z_t^{(\Omega_t)}-o_t^{(\Omega_t)},
\label{eq:u_eliminate}
\end{equation}
the reduced joint log-density for a feasible decomposition can be written as
\begin{equation}
\begin{aligned}
\tilde{\mathcal{L}}_t\!\left(\theta_n,o_t^{(\Omega_t)}\right)
{}&=
-\frac{1}{2}
\left(z_t^{(\Omega_t)}-o_t^{(\Omega_t)}-\mu_n^{(\Omega_t)}\right)^{\top}
\left(\Sigma_n^{(\Omega_t,\Omega_t)}\right)^{-1} \\
&\cdot
\left(z_t^{(\Omega_t)}-o_t^{(\Omega_t)}-\mu_n^{(\Omega_t)}\right)
-\frac{1}{2}\log\det\Sigma_n^{(\Omega_t,\Omega_t)} \\
&+|\Omega_t|\log\lambda_n
-\lambda_n\|o_t^{(\Omega_t)}\|_1
+C_t,
\end{aligned}
\label{eq:reduced_log_joint}
\end{equation}
where $C_t$ is constant optimization variables.

We approximate the frame-level marginal log-likelihood by the maximum value of the reduced joint log-density. The resulting estimator is
\begin{equation}
\begin{aligned}
\hat{\theta}_n
=
\arg\max_{\mu_n,\;\Sigma_n\succ 0,\;\lambda_n>0}
\sum_{t\in W_n}
\max_{o_t^{(\Omega_t)}\in\mathbb{R}^{|\Omega_t|}}
\tilde{\mathcal{L}}_t\!\left(\theta_n,o_t^{(\Omega_t)}\right).
\end{aligned}
\label{eq:window_profiled_max}
\end{equation}

Dropping constants and changing the sign gives the equivalent minimization form:
\begin{equation}
\begin{aligned}
\hat{\theta}_n
=
\arg\min_{\mu_n,\;\Sigma_n\succ 0,\;\lambda_n>0}
\sum_{t\in W_n}
\min_{o_t^{(\Omega_t)}\in\mathbb{R}^{|\Omega_t|}}
\Bigg\{-
\tilde{\mathcal{L}}_t\!\left(\theta_n,o_t^{(\Omega_t)}\right)
\Bigg\}.
\end{aligned}
\label{eq:window_profiled_min}
\end{equation}

The objective in \eqref{eq:window_profiled_min} is a bilevel optimization problem. The outer variables are the shared parameters $(\mu_n,\Sigma_n,\lambda_n)$. The inner variables are the frame-specific deviation vectors $\{o_t^{(\Omega_t)}\}_{t\in W_n}$. For fixed outer parameters, each inner problem is convex because it contains a positive-definite quadratic term and an $\ell_1$ term. The full problem is generally nonconvex in $(\mu_n,\Sigma_n,\lambda_n)$ due to the inverse covariance term and the $\log\det$ term.

After obtaining $\hat{\theta}_n=(\hat{\mu}_n,\hat{\Sigma}_n,\hat{\lambda}_n)$, the conditional distribution of the unobserved entries is determined. Let
\begin{equation}
\Omega_t^{\mathrm{mis}}=S\setminus\Omega_t
\label{eq:Omega_mis_def}
\end{equation}
be the unobserved index set at time $t$. The posterior distribution in the log domain is
\begin{equation}
\begin{aligned}
p\!\left(
z_t^{(\Omega_t^{\mathrm{mis}})}
\mid
z_t^{(\Omega_t)},\Omega_t;\hat{\theta}_n
\right).
\end{aligned}
\label{eq:posterior_z}
\end{equation}

By the inverse transform
\begin{equation}
X_t=\exp(Z_t)-\varepsilon,
\label{eq:inverse_transform}
\end{equation}
we obtain the corresponding posterior distribution in the original traffic domain:
\begin{equation}
\begin{aligned}
p\!\left(
x_t^{(\Omega_t^{\mathrm{mis}})}
\mid
x_t^{(\Omega_t)},\Omega_t;\hat{\theta}_n
\right).
\end{aligned}
\label{eq:posterior_x}
\end{equation}
\section{Method}
\label{sec:method}

This section presents the log-domain Gaussian-Laplacian decomposition model, its regularized reformulation, a block coordinate descent solver, and uncertainty quantification.
\subsection{Statistical Structure Discovery}
\label{subsec:statistical_pattern}
The first observation is local stationarity over short intervals. Over such intervals, active jobs or resource contention change little, so consecutive frames are driven by similar communication patterns. As a result, the empirical mean, covariance, and overall distribution shape remain stable within a proper window, which supports using a local window for parameter sharing. 

The second observation concerns the dominant structure in the log domain. Raw traffic spans a wide dynamic range because flows differ substantially in message size. The log transform compresses this variation and regularizes the dominant structure. After transformation, most samples cluster around a stable center within each window, source-destination pairs remain statistically dependent due to common jobs or similar communication patterns. A detailed empirical analysis demonstrating that the dominant component of log-domain traffic approximately follows a joint Gaussian distribution is detailed in Appendix~\ref{appendix:gaussian}.

The third observation is the presence of sparse deviations around the dominant structure. Short-lived events such as incast bursts or transient congestion can create large deviations on a small number of entries. These events do not alter the dominant structure, but they produce localized perturbations that cannot be absorbed by the principal component. This motivates a sparse additive deviation term. Empirical evidence supporting the presence of these sparse deviations is presented in Appendix~\ref{app:sparse_deviation_evidence}.

Based on these observations, we model log-domain traffic in each local window as the sum of a dominant structured component and a sparse deviation component. We use a joint Gaussian model for the dominant component to capture the shared center and spatial dependence of regular traffic, and a Laplacian prior for the deviation component to capture sparse local perturbations. Accordingly, for each time $t \in W_n$, we write
\begin{equation}
Z_t = \log(X_t + \varepsilon) = U_t + O_t,
\label{eq:pattern_decomposition}
\end{equation}
where $U_t$ denotes the dominant structured component and $O_t$ denotes the sparse deviation component. We then parameterize them as
\begin{equation}
U_t \sim \mathcal{N}(\mu_n, \Sigma_n), \qquad t \in W_n,
\label{eq:pattern_gaussian}
\end{equation}
and
\begin{equation}
p_O(o;\lambda_n)
=
\left(\frac{\lambda_n}{2}\right)^d
\exp\!\left(-\lambda_n \|o\|_1\right),
\label{eq:pattern_laplacian}
\end{equation}
where $\mu_n$ and $\Sigma_n$ are the shared mean and covariance in window $W_n$, and $\lambda_n$ controls the sparsity of local deviations.

\subsection{Analysis and Reformulation}
\label{subsec:regularized_reformulation}

The profiled objective in \eqref{eq:window_profiled_min} is computable, but it is not directly suitable for optimization. To analyze its behavior, define
\begin{equation}
\Psi(\mu_n,\Sigma_n,\lambda_n)
:=
\sum_{\tau\in W_n}
\min_{o_\tau^{(\Omega_\tau)}}
\left\{
-\tilde{\mathcal{L}}_\tau\!\left(\theta_n,o_\tau^{(\Omega_\tau)}\right)
\right\}.
\label{eq:profile_obj_sec42}
\end{equation}

For fixed $(\mu_n,\Sigma_n,\lambda_n)$, each inner problem is a convex optimization over the deviation variable. The main difficulty lies in the outer objective, which is generally nonconvex in $(\mu_n,\Sigma_n,\lambda_n)$.

More importantly, \eqref{eq:profile_obj_sec42} is not lower bounded. First, since $o_\tau^{(\Omega_\tau)}=0$ is always feasible, we have
\begin{equation}
\Psi(\mu_n,\Sigma_n,\lambda_n)
\le
C(\mu_n,\Sigma_n)
-
\Bigg(\sum_{\tau\in W_n}|\Omega_\tau|\Bigg)\log\lambda_n,
\label{eq:lambda_escape_sec42}
\end{equation}
where $C(\mu_n,\Sigma_n)$ is independent of $\lambda_n$. Hence the objective decreases without bound as $\lambda_n\to\infty$.

Second, for any fixed $\mu_n$ and $\lambda_n>0$, one can choose
\begin{equation}
o_\tau^{(\Omega_\tau)}
=
z_\tau^{(\Omega_\tau)}-\mu_n^{(\Omega_\tau)},
\label{eq:sigma_o_sec42}
\end{equation}
which makes the quadratic residual term vanish. If we further set $\Sigma_n=\epsilon I$ with $\epsilon\downarrow 0$, then
\begin{equation}
\log\det \Sigma_n^{(\Omega_\tau,\Omega_\tau)}
=
|\Omega_\tau|\log\epsilon
\to -\infty.
\label{eq:sigma_collapse_sec42}
\end{equation}

This shows that the covariance matrix can continue to shrink and further reduce the objective through the log-determinant term.

To remove these two escape directions, we introduce the regularizers
\begin{equation}
\rho\,\mathrm{tr}(\Sigma_n^{-1}), \qquad \rho>0,
\label{eq:cov_reg_sec42}
\end{equation}
and
\begin{equation}
\eta\lambda_n, \qquad \eta>0.
\label{eq:lambda_reg_sec42}
\end{equation}

The resulting regularized problem is
\begin{equation}
(\hat{\mu}_n,\hat{\Sigma}_n,\hat{\lambda}_n)
=
\arg\min_{\mu_n,\Sigma_n\succ 0,\lambda_n>0}
\Psi(\mu_n,\Sigma_n,\lambda_n)
+\rho\,\mathrm{tr}(\Sigma_n^{-1})
+\eta\lambda_n .
\label{eq:regularized_obj_sec42}
\end{equation}

Equation \eqref{eq:regularized_obj_sec42} preserves the original nested inference structure while eliminating boundary degeneracy. It remains nonconvex and nonsmooth, but it is now better posed and numerically stable.

\subsection{Block Coordinate Solution}
\label{subsec:bcd_solver}

We solve \eqref{eq:regularized_obj_sec42} using block coordinate descent. By explicitly reintroducing the deviation variables, we rewrite the problem as
\begin{equation}
\min_{\mu_n,\Sigma_n\succ 0,\lambda_n>0,\{o_\tau^{(\Omega_\tau)}\}_{\tau\in W_n}}
F_n\!\left(\mu_n,\Sigma_n,\lambda_n,\{o_\tau^{(\Omega_\tau)}\}_{\tau\in W_n}\right),
\label{eq:joint_obj_sec43}
\end{equation}
where
\begin{equation}
F_n
=
\sum_{\tau\in W_n}
\phi_\tau(\mu_n,\Sigma_n,\lambda_n; o_\tau^{(\Omega_\tau)})
+\rho\,\mathrm{tr}(\Sigma_n^{-1})
+\eta\lambda_n .
\label{eq:joint_obj_expanded_sec43}
\end{equation}

This form exposes four variable blocks: deviation variables, mean, sparsity parameter, and covariance.

We first update the deviation variables. For fixed $\mu_n$, $\Sigma_n$, and $\lambda_n$, define
\begin{equation}
r_\tau=z_\tau^{(\Omega_\tau)}-\mu_n^{(\Omega_\tau)},\qquad
Q_\tau=\left(\Sigma_n^{(\Omega_\tau,\Omega_\tau)}\right)^{-1},
\label{eq:ot_aux_sec43}
\end{equation}
and solve, for each $\tau\in W_n$,
\begin{equation}
\begin{aligned}
\hat{o}_\tau^{(\Omega_\tau)}
=
\arg\min_{o_\tau^{(\Omega_\tau)}}
\bigg \{
\frac{1}{2}
\left(o_\tau^{(\Omega_\tau)}-r_\tau\right)^{\top}
Q_\tau
\left(o_\tau^{(\Omega_\tau)}-r_\tau\right)\\
+\lambda_n\|o_\tau^{(\Omega_\tau)}\|_1
\bigg \}.
\label{eq:ot_update_sec43}
\end{aligned}
\end{equation}

After that, we update the shared mean. Let $P_\tau$ be the selection matrix for $\Omega_\tau$, and define
\begin{equation}
a_\tau=z_\tau^{(\Omega_\tau)}-\hat{o}_\tau^{(\Omega_\tau)}.
\label{eq:mu_aux_sec43}
\end{equation}

Then $\mu_n$ is obtained from
\begin{equation}
\left(
\sum_{\tau\in W_n}
P_\tau^{\top}
\left(\Sigma_n^{(\Omega_\tau,\Omega_\tau)}\right)^{-1}
P_\tau
\right)\mu_n
=
\sum_{\tau\in W_n}
P_\tau^{\top}
\left(\Sigma_n^{(\Omega_\tau,\Omega_\tau)}\right)^{-1}
a_\tau .
\label{eq:mu_normal_sec43}
\end{equation}

The sparsity parameter is then updated by
\begin{equation}
\hat{\lambda}_n
=
\frac{\sum_{\tau\in W_n}|\Omega_\tau|}
{\sum_{\tau\in W_n}\|\hat{o}_\tau^{(\Omega_\tau)}\|_1+\eta}.
\label{eq:lambda_update_closed_sec43}
\end{equation}

Finally, define the residual
\begin{equation}
e_\tau^{(\Omega_\tau)}
=
z_\tau^{(\Omega_\tau)}
-\hat{o}_\tau^{(\Omega_\tau)}
-\mu_n^{(\Omega_\tau)},
\label{eq:sigma_resid_sec43}
\end{equation}
and update the covariance by

\begin{equation}
\begin{split}
\hat{\Sigma}_n = &\arg\min_{\Sigma_n\succ 0} \bigg\{ \sum_{\tau\in W_n} \frac{1}{2} [(e_\tau^{(\Omega_\tau)})^{\top} (\Sigma_n^{(\Omega_\tau,\Omega_\tau)})^{-1} e_\tau^{(\Omega_\tau)} \\+& \frac{1}{2}\log\det \Sigma_n^{(\Omega_\tau,\Omega_\tau)} ] 
+ \rho\,\mathrm{tr}(\Sigma_n^{-1}) \bigg\}.
\end{split}
\end{equation}

The complete iteration is therefore
\begin{equation}
\{o_\tau^{(\Omega_\tau)}\}_{\tau\in W_n}
\;\longrightarrow\;
\mu_n
\;\longrightarrow\;
\lambda_n
\;\longrightarrow\;
\Sigma_n .
\label{eq:bcd_order_sec43}
\end{equation}

This block structure makes the regularized nested inference problem tractable in practice.

\subsection{Uncertainty Quantification}
Once the shared parameter estimate
$\hat{\theta}_n = (\hat{\mu}_n, \hat{\Sigma}_n, \hat{\lambda}_n)$
and the frame-level optimal deviations
$\{\hat{o}_t^{(\Omega_t)}\}_{t \in W_n}$
have been obtained from the block coordinate descent procedure
in~(43), the conditional distribution of each unobserved log-domain
entry $Z_t(j)$, $j \in \Omega_t^{\mathrm{mis}}$, follows a
one-dimensional Normal--Laplace distribution arising from the
Gaussian--Laplace convolution posterior, from which a
closed-form marginal CDF enables efficient construction of
plug-in $95\%$ credible intervals in the original traffic domain;
the complete derivation is provided in Appendix~\ref{appendix:uncertainty}.

 
\section{Experiments}
\label{sec:experiments}
 
\subsection{Experimental Setup}
\label{subsec:setup}
 
We evaluate on real-world traffic datasets.
Facebook-Pod-B and Facebook-ToR-A are DCN datasets
from Facebook's production infrastructure:
Pod-B captures pod-level aggregated traffic ($N{=}8$, $d{=}64$),
while ToR-A captures rack-level traffic ($N{=}155$, $d{=}24{,}025$).
DCN traffic exhibits strong burstiness and heavy-tailed
distributions~\cite{benson2010network}, making these the primary
evaluation datasets; additional WAN results on GÉANT~\cite{uhlig2006providing}
($N{=}22$, $d{=}484$) are reported in Appendix~\ref{app:extended}.
We compare against three baselines covering complementary paradigms.
PSW-I~\cite{wang2025optimal} minimizes an optimal-transport
discrepancy between time-series patches without parametric training.
ImputeFormer~\cite{nie2024imputeformer} is a low-rank-induced
Transformer combining matrix-completion priors with projected attention
and Fourier sparsity regularization.
Diffusion-TM~\cite{yuan2025diffusion} is a DDPM-based
generative model using a routing-free TMC inference branch.
Utimac runs on an Intel Core i9-13980HX CPU; all neural baselines
run in PyTorch on an NVIDIA GeForce RTX~5090 GPU.
Entries are masked under
$p_{\mathrm{obs}} \in \{0.3, 0.4, \ldots, 0.9\}$;
the same fixed-seed mask is shared by all methods.
Burst flows are identified with dominance multiplier $\alpha{=}2.0$
and threshold $\beta{=}0.8$.
 
\subsection{Evaluation Metrics}
\label{subsec:metrics}
 
All metrics are computed exclusively on missing positions
$\Omega^{\mathrm{miss}}_t$ (see Appendix~\ref{app:metrics} for
formal definitions).
Overall imputation accuracy is measured by MAE, RMSE, and
wMAPE across all missing entries.
Burst flow detection uses Precision, Recall, and F1 to
assess identification of anomalously large flows within each time
slot; full per-dataset detection tables are provided in
Appendix~\ref{app:extended}.
Burst flow imputation accuracy uses Burst-MAE, Burst-RMSE,
and Burst-wMAPE to evaluate numerical recovery on true missing burst
flows, with Burst-wMAPE as the headline indicator.
Metric for accuracy of the estimation intervals is PICP$_{97.5}$, the fraction of true missing values falling within
the closed-form 97.5\% estimation interval derived.
 
\subsection{Results}
\label{subsec:results}
 
\begin{figure}[t]
  \centering
  \includegraphics[width=0.9\columnwidth]{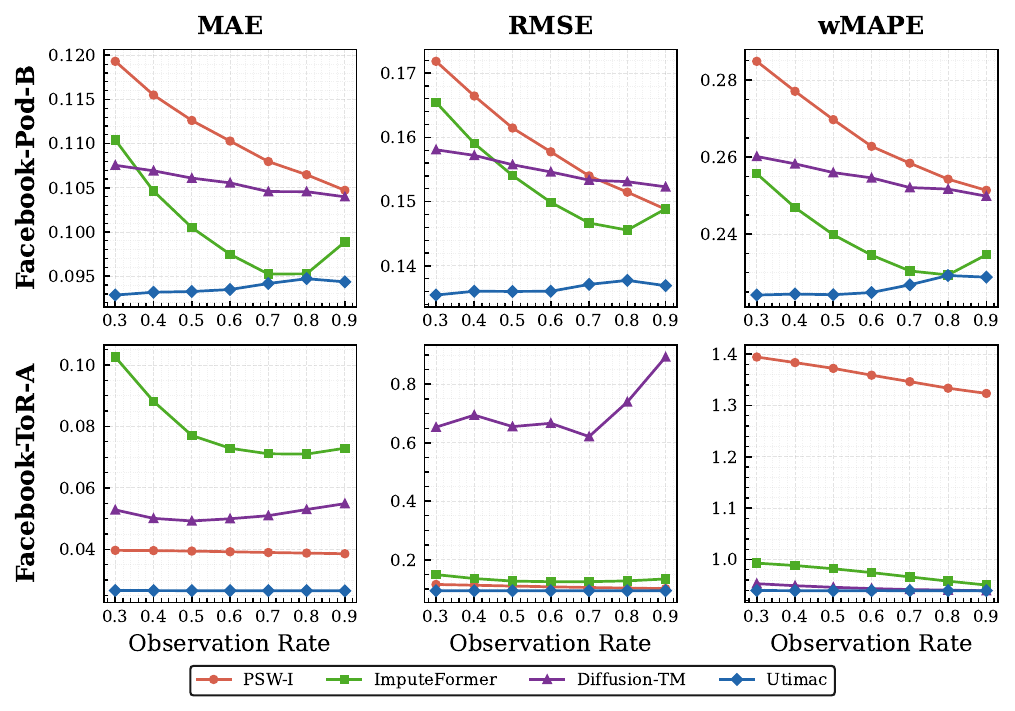}
  \caption{\textnormal{Overall MAE, RMSE, and wMAPE vs.\
    $p_{\mathrm{obs}}$ on Facebook-Pod-B (top) and
    Facebook-ToR-A (bottom).}}
  \label{fig:overall_metrics}
\end{figure}
 
\begin{figure}[t]
  \centering
  \includegraphics[width=0.9\columnwidth]{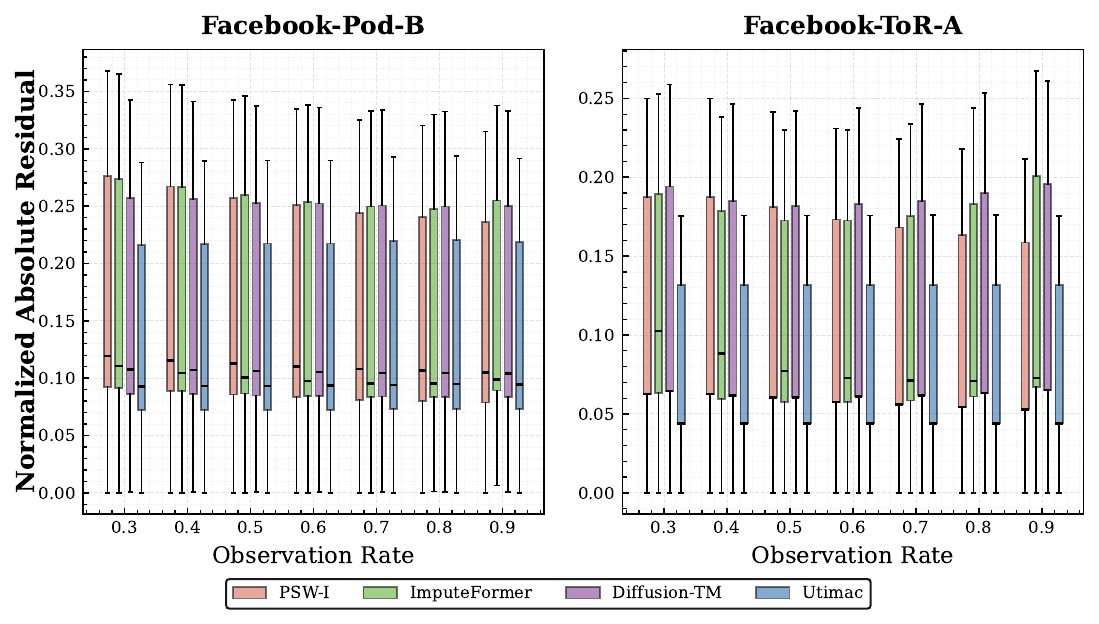}
  \caption{\textnormal{Normalized absolute residual distributions
    (5th--95th percentile) vs.\ $p_{\mathrm{obs}}$ on
    Facebook-Pod-B (left) and Facebook-ToR-A (right).}}
  \label{fig:residual_boxplot}
\end{figure}
 
\begin{figure}[t]
  \centering
  \includegraphics[width=0.9\columnwidth]{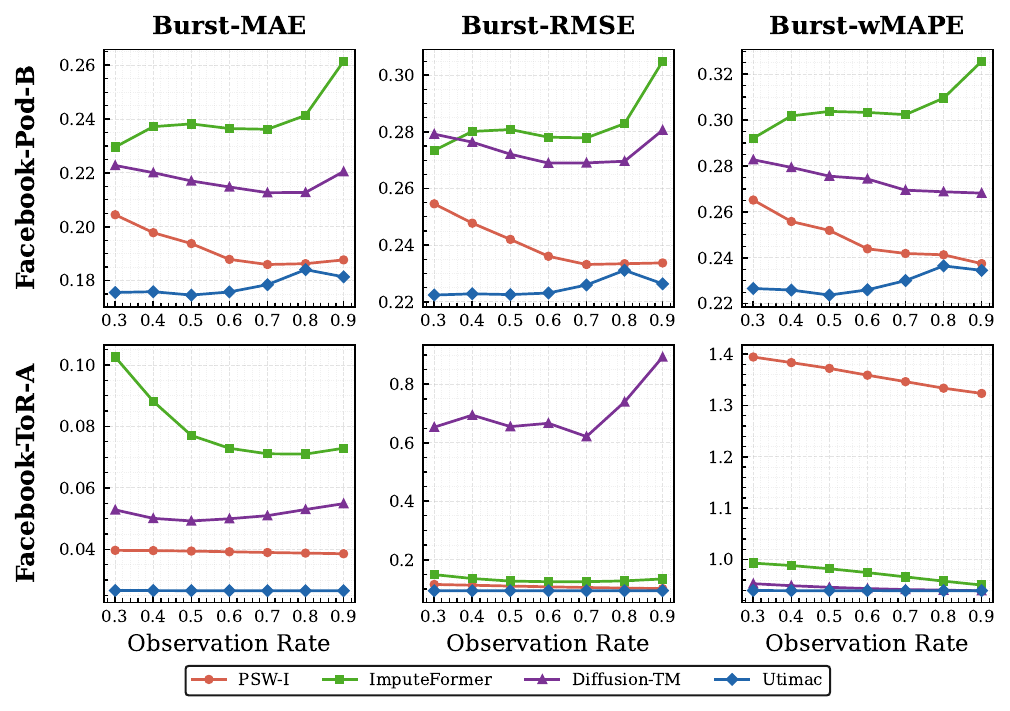}
  \caption{\textnormal{Burst-MAE, Burst-RMSE, and Burst-wMAPE
    vs.\ $p_{\mathrm{obs}}$ on Facebook-Pod-B and
    Facebook-ToR-A.}}
    \vspace{-0.5cm}
  \label{fig:burst_metrics}
\end{figure}
 
Figure~\ref{fig:overall_metrics} reports overall MAE, RMSE, and wMAPE
on DCN datasets.
On Facebook-Pod-B, Utimac leads all baselines across the full
observation range: its wMAPE advantage over the best competing method
(ImputeFormer) is $\mathbf{12.3\%}$ at $p_{\mathrm{obs}}{=}0.3$ and
narrows to $2.5\%$ at $p_{\mathrm{obs}}{=}0.9$.
This trend directly reflects the theoretical guarantee established in
Section~\ref{sec:method}: by analytically characterizing the
statistical structure of DCN traffic, Utimac
constructs a statistically principled completion that is most
informative precisely when observations are scarce, while
learning-based methods require more supervision to close the gap.
On Facebook-ToR-A, the lead is decisive and stable:
Utimac's MAE is $\mathbf{32.9\%}$ below PSW-I at
$p_{\mathrm{obs}}{=}0.3$ and remains $31.3\%$ lower at
$p_{\mathrm{obs}}{=}0.9$.
Diffusion-TM exhibits severe RMSE instability on this dataset,
reaching $0.89$ at $p_{\mathrm{obs}}{=}0.9$ ($9.5{\times}$ Utimac).
 
Figure~\ref{fig:residual_boxplot} confirms that Utimac's lower mean
error reflects uniformly better per-entry recovery: its 95th-percentile
residual on Facebook-Pod-B is $0.288$ at $p_{\mathrm{obs}}{=}0.3$,
versus $0.368$ for PSW-I and $0.343$ for Diffusion-TM.
Figure~\ref{fig:burst_metrics} evaluates recovery on true missing burst
flows ($\alpha{=}2.0$, $\beta{=}0.8$).
On Facebook-Pod-B, Utimac's Burst-wMAPE advantage over PSW-I
is $\mathbf{14.5\%}$ at $p_{\mathrm{obs}}{=}0.3$, narrowing to
$1.2\%$ at $p_{\mathrm{obs}}{=}0.9$, confirming that the inference
advantage is most valuable when observations are scarce.
On Facebook-ToR-A, Utimac maintains the lowest Burst-wMAPE
throughout; ImputeFormer shows the weakest burst recovery at
medium-to-high $p_{\mathrm{obs}}$ on Pod-B. Accuracy of the estimation intervals is further validated via PICP$_{97.5}$ in Appendix~\ref{app:picp}.

\section{Conclusion}
This paper presents Utimac, an uncertainty-aware traffic matrix completion method that models log-domain traffic as a joint Gaussian principal component plus a Laplacian sparse deviation, casting completion as parameter inference solved by block coordinate descent with closed-form predictive intervals. Experiments on real-world DCN and WAN datasets show that Utimac consistently outperforms representative baselines, with its advantage most pronounced under sparse observations.

\bibliographystyle{ACM-Reference-Format}
\bibliography{reference}

\appendix

\section{Empirical Validation of Joint Gaussianity for Log-Domain Traffic Vectors}
\label{appendix:gaussian}

This appendix validates the model assumption that the principal component approximation of the decomposition of the logarithmic domain flow vector follows a Gaussian joint distribution
$\mathcal{N}(\boldsymbol{\mu}_n,\boldsymbol{\Sigma}_n)$,
using a locally stationary window of $L = 200$ frames and
$d = 56$ OD pairs from the Facebook pod-b dataset.

\subsection{Validation Criterion and Strategy}
\label{appendix:gaussian:criterion}

The necessary and sufficient condition for joint Gaussianity is given
by the Cram\'{e}r--Wold theorem.

\begin{proposition}[Cram\'{e}r--Wold: Necessary and Sufficient Condition]
\label{prop:cramer_wold}
  $\boldsymbol{Z} \in \mathbb{R}^d$ follows
  $\mathcal{N}(\boldsymbol{\mu}, \boldsymbol{\Sigma})$
  if and only if for every nonzero $\boldsymbol{a} \in \mathbb{R}^d$,
  \begin{equation}
    \boldsymbol{a}^\top \boldsymbol{Z}
    \;\sim\;
    \mathcal{N}\!\bigl(\boldsymbol{a}^\top\boldsymbol{\mu},\;
    \boldsymbol{a}^\top\boldsymbol{\Sigma}\boldsymbol{a}\bigr).
    \label{eq:cramer_wold}
  \end{equation}
\end{proposition}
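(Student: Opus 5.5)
The plan is to prove both directions through characteristic functions, since a distribution on $\mathbb{R}^d$ is determined by its characteristic function $\varphi_{\boldsymbol{Z}}(\boldsymbol{a})=\mathbb{E}[e^{i\boldsymbol{a}^\top\boldsymbol{Z}}]$. I take $\mathcal{N}(\boldsymbol{\mu},\boldsymbol{\Sigma})$ in the usual sense, allowing positive semidefinite $\boldsymbol{\Sigma}$, namely as the law with characteristic function $\exp\bigl(i\boldsymbol{a}^\top\boldsymbol{\mu}-\tfrac12\boldsymbol{a}^\top\boldsymbol{\Sigma}\boldsymbol{a}\bigr)$. A one-dimensional $\mathcal{N}(m,s^2)$ with $s^2=0$ is the point mass at $m$; this degenerate case can arise when $\boldsymbol{a}^\top\boldsymbol{\Sigma}\boldsymbol{a}=0$.

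\emph{Necessity.} Assume $\boldsymbol{Z}\sim\mathcal{N}(\boldsymbol{\mu},\boldsymbol{\Sigma})$ and fix $\boldsymbol{a}\neq 0$. For real $s$,
\begin{equation*}
\mathbb{E}\bigl[e^{is\,\boldsymbol{a}^\top\boldsymbol{Z}}\bigr]=\varphi_{\boldsymbol{Z}}(s\boldsymbol{a})=\exp\!\bigl(is\,\boldsymbol{a}^\top\boldsymbol{\mu}-\tfrac12 s^2\,\boldsymbol{a}^\top\boldsymbol{\Sigma}\boldsymbol{a}\bigr).
\end{equation*}
This is the characteristic function of $\mathcal{N}(\boldsymbol{a}^\top\boldsymbol{\mu},\boldsymbol{a}^\top\boldsymbol{\Sigma}\boldsymbol{a})$. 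By one-dimensional uniqueness, the projection $\boldsymbol{a}^\top\boldsymbol{Z}$ has exactly the law in \eqref{eq:cramer_wold}.

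\emph{Sufficiency.} Assume \eqref{eq:cramer_wold} holds for every nonzero $\boldsymbol{a}$. Evaluating the characteristic function of the scalar $\boldsymbol{a}^\top\boldsymbol{Z}$ at $s=1$ gives
\begin{equation*}
\varphi_{\boldsymbol{Z}}(\boldsymbol{a})=\mathbb{E}\bigl[e^{i\boldsymbol{a}^\top\boldsymbol{Z}}\bigr]=\exp\!\bigl(i\boldsymbol{a}^\top\boldsymbol{\mu}-\tfrac12\boldsymbol{a}^\top\boldsymbol{\Sigma}\boldsymbol{a}\bigr)
\end{equation*}
for all $\boldsymbol{a}\neq 0$, and both sides equal $1$ at $\boldsymbol{a}=0$. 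The characteristic function of $\boldsymbol{Z}$ therefore coincides everywhere with that of $\mathcal{N}(\boldsymbol{\mu},\boldsymbol{\Sigma})$.

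The main obstacle is the uniqueness theorem for characteristic functions on $\mathbb{R}^d$, which is what converts this agreement of characteristic functions into equality of distributions. I would cite it as standard, for instance via Lévy's inversion formula, or via the fact that trigonometric polynomials are dense enough to determine a finite Borel measure. With that theorem, $\boldsymbol{Z}\sim\mathcal{N}(\boldsymbol{\mu},\boldsymbol{\Sigma})$. Finally, $\boldsymbol{\Sigma}$ is automatically the covariance of $\boldsymbol{Z}$, since $\mathrm{Var}(\boldsymbol{a}^\top\boldsymbol{Z})=\boldsymbol{a}^\top\boldsymbol{\Sigma}\boldsymbol{a}$ for all $\boldsymbol{a}$ and polarization recovers the full matrix.
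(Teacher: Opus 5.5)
Your proof is correct. The paper gives no proof of this proposition. It invokes the result by name as the classical Cram\'{e}r--Wold theorem and uses it only to motivate its finite-direction projection tests.

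Your characteristic-function argument is the standard proof that this citation relies on. You reduce both directions to the identity $\mathbb{E}[e^{is\,\boldsymbol{a}^\top\boldsymbol{Z}}]=\varphi_{\boldsymbol{Z}}(s\boldsymbol{a})$, and then to uniqueness of characteristic functions on $\mathbb{R}$ and on $\mathbb{R}^d$.

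Allowing positive semidefinite $\boldsymbol{\Sigma}$, with degenerate directions giving point masses, is a harmless generalization of the paper's positive-definite setting. Your closing polarization remark is not needed for the stated equivalence.
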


Directly verifying Proposition~\ref{prop:cramer_wold} requires applying
a one-dimensional normality test to every direction in $\mathbb{R}^d$,
which is statistically infeasible.
We therefore adopt a two-stage progressive strategy:
first, a global screening is performed using a necessary condition
for joint Gaussianity;
then, Proposition~\ref{prop:cramer_wold} is approximately verified on a
finite, representative set of directions.

A necessary condition for joint Gaussianity follows from the theoretical
distribution of the Mahalanobis distance:
if $\boldsymbol{Z} \sim \mathcal{N}(\boldsymbol{\mu}, \boldsymbol{\Sigma})$, then
\begin{equation}
  D_M^2(\boldsymbol{z})
  = (\boldsymbol{z} - \hat{\boldsymbol{\mu}})^\top
    \hat{\boldsymbol{\Sigma}}^{-1}
    (\boldsymbol{z} - \hat{\boldsymbol{\mu}})
  \;\sim\; \chi^2(d).
  \label{eq:mahal}
\end{equation}

The ordered squared Mahalanobis distances
\begin{equation}
  D_{M,(1)}^2 \leq \cdots \leq D_{M,(n)}^2
  \label{eq:ordered_mahal}
\end{equation}
are plotted against the theoretical quantiles
\begin{equation}
  q_k = F_{\chi^2(d)}^{-1}\!\left(\frac{k-0.5}{n}\right),
  \quad k = 1, \ldots, n,
  \label{eq:chi2_quantile}
\end{equation}
to form the QQ plot.
If the point cloud systematically deviates from the reference line
$y = x$, the joint Gaussian hypothesis is directly rejected;
if the data points closely track the reference line, the necessary
condition in~\eqref{eq:mahal} is supported, though this result does
not imply sufficiency.
Section~\ref{appendix:gaussian:proj} then approximately verifies
the one-dimensional Gaussianity required by~\eqref{eq:cramer_wold}
along 181 representative directions.

\subsection{Mahalanobis Distance QQ Plot}
\label{appendix:gaussian:mahal}

Figure~\ref{fig:mahal_qq} compares the Mahalanobis distance QQ plots
for the raw domain and the log domain.
In the raw domain, the point cloud exhibits a pronounced S-shaped
deviation: the lower tail falls below the reference line $y = x$,
and a small number of points in the upper tail lie significantly above
the reference line (observed values near 92 at the theoretical quantile
around 80), reflecting the heavy-tailed, non-Gaussian nature of raw traffic.
In the log domain, the point cloud closely tracks the reference line
across the entire value range, with only minor deviations in the
extreme lower tail, indicating that the log transformation effectively
corrects the distributional shape to be compatible with joint Gaussianity.

\begin{figure}[htbp]
  \centering
  \begin{subfigure}[b]{0.47\columnwidth}
    \centering
    \includegraphics[width=\textwidth]{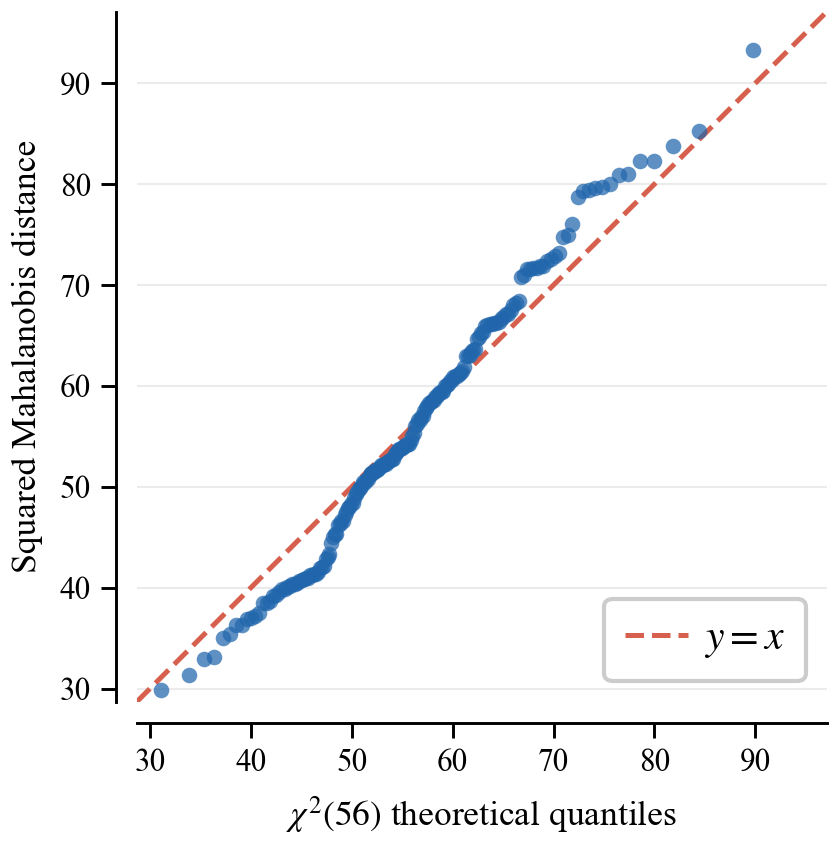}
    \caption{Raw domain}
    \label{fig:mahal_raw}
  \end{subfigure}
  \hfill
  \begin{subfigure}[b]{0.47\columnwidth}
    \centering
    \includegraphics[width=\textwidth]{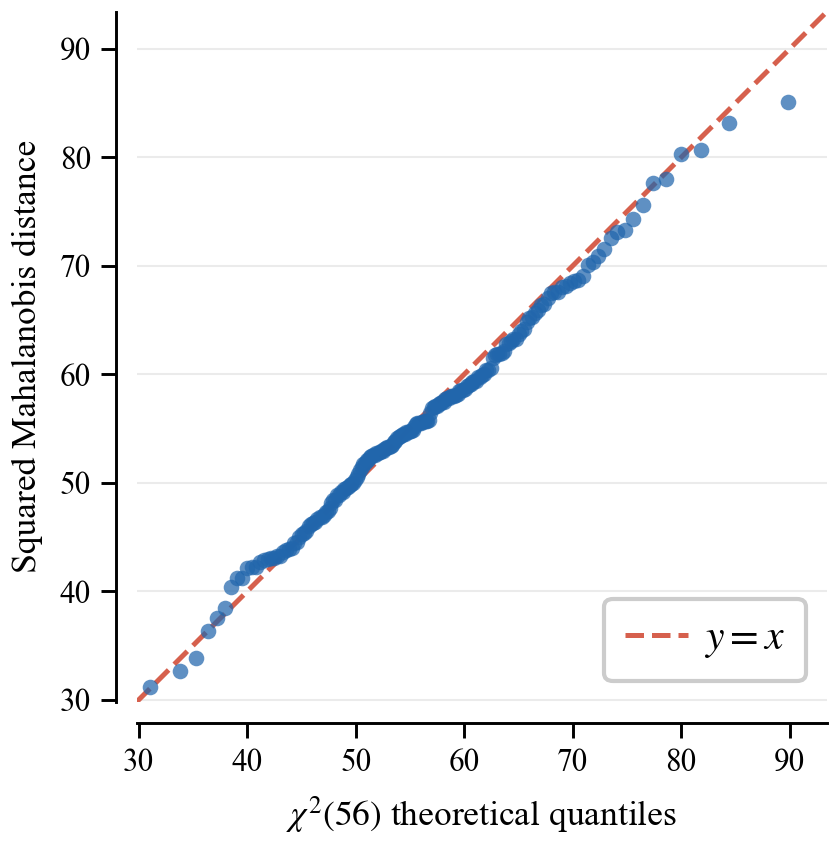}
    \caption{Log domain}
    \label{fig:mahal_log}
  \end{subfigure}
  \caption{\textnormal{%
    Mahalanobis distance QQ plots for the raw and log domains.
  }}
  \label{fig:mahal_qq}
\end{figure}

\subsection{Multi-Direction Projection Tests}
\label{appendix:gaussian:proj}

To approximately verify the necessary and sufficient condition in
Proposition~\ref{prop:cramer_wold}, projection sequences
$s_t = \boldsymbol{a}^\top \boldsymbol{z}_t$
are computed on standardized log-domain data along 181 representative
directions and subjected to the
Shapiro--Wilk test~\cite{shapiro1965analysis},
the D'Agostino $K^2$ test~\cite{d1973tests},
and the Anderson--Darling test~\cite{anderson1954test}
($\alpha = 0.05$).
The tested directions cover four categories:
coordinate directions $\boldsymbol{e}_k$ (56~directions),
the top 5 eigenvectors of the sample covariance matrix
(PCA, 5~directions),
pairwise combination directions
$(\boldsymbol{e}_i \pm \boldsymbol{e}_j)/\sqrt{2}$ (20~directions),
and random directions sampled uniformly from the unit sphere
(100~directions).

As the most representative subset for
Proposition~\ref{prop:cramer_wold},
the PCA directions achieve a pass rate of 4/5 (80\%),
the random directions achieve 84\% (84/100),
and the two categories combined yield 88/105 = 83.8\%.
The coordinate directions have a lower pass rate (17/56, 30.4\%),
consistent with individual OD pairs exhibiting local heavy-tail
perturbations caused by short-term traffic bursts;
such perturbations are averaged out in PCA and randomly mixed
directions, allowing the joint Gaussian structure to emerge.

Figure~\ref{fig:proj_qq} presents QQ plots for six representative
passing directions, covering the PCA, coordinate, and random
categories.
The point clouds closely track the OLS best-fit line in each panel,
consistent with the one-dimensional Gaussianity required
by~\eqref{eq:cramer_wold}.

\begin{figure}[t]
  \centering
  \includegraphics[width=\columnwidth]{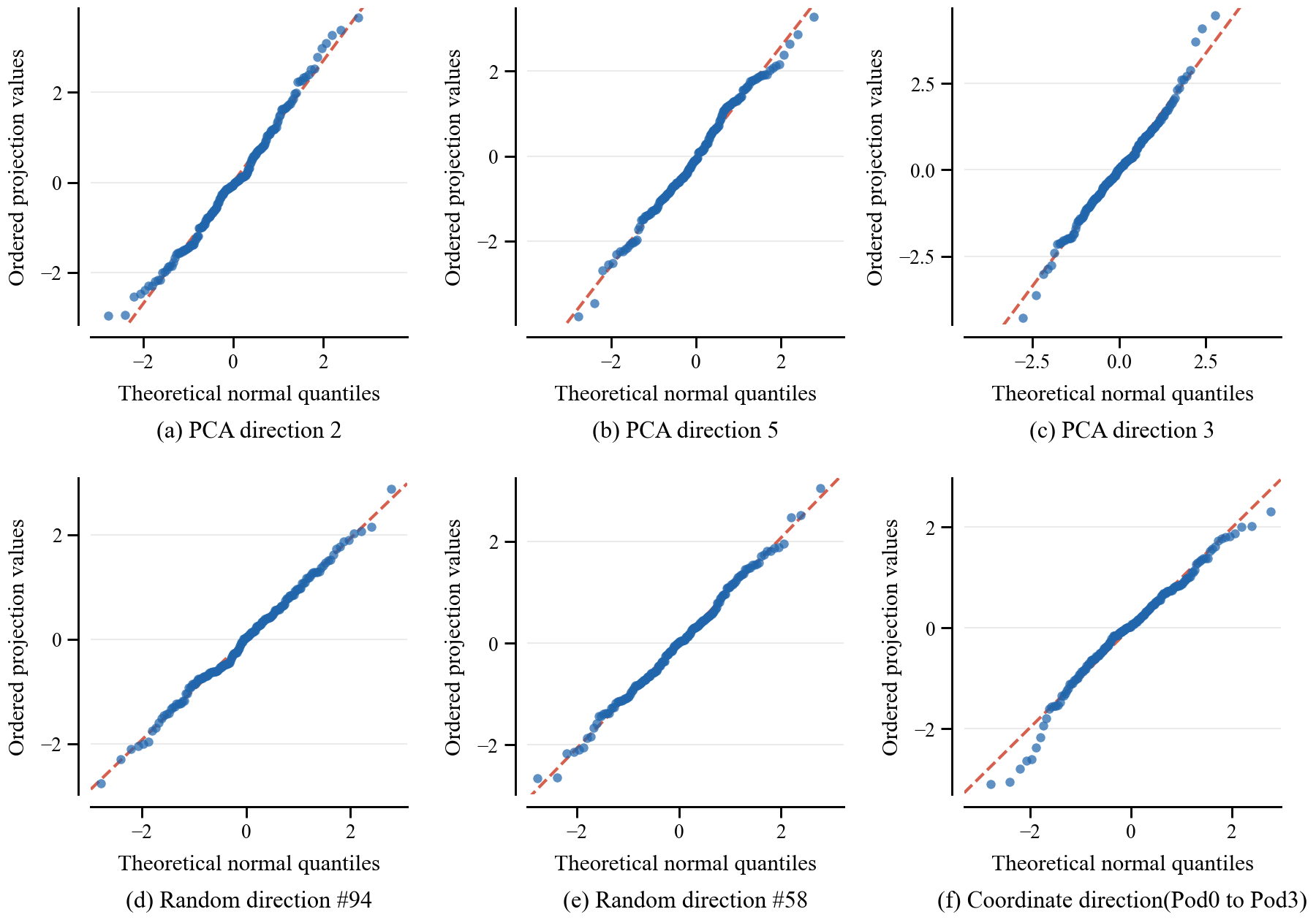}
  \caption{\textnormal{%
    Normal QQ plots for six representative projection directions in the
    log domain.
  }}
  \label{fig:proj_qq}
\end{figure}

\begin{figure}[htbp]
  \centering
  \includegraphics[width=\columnwidth]{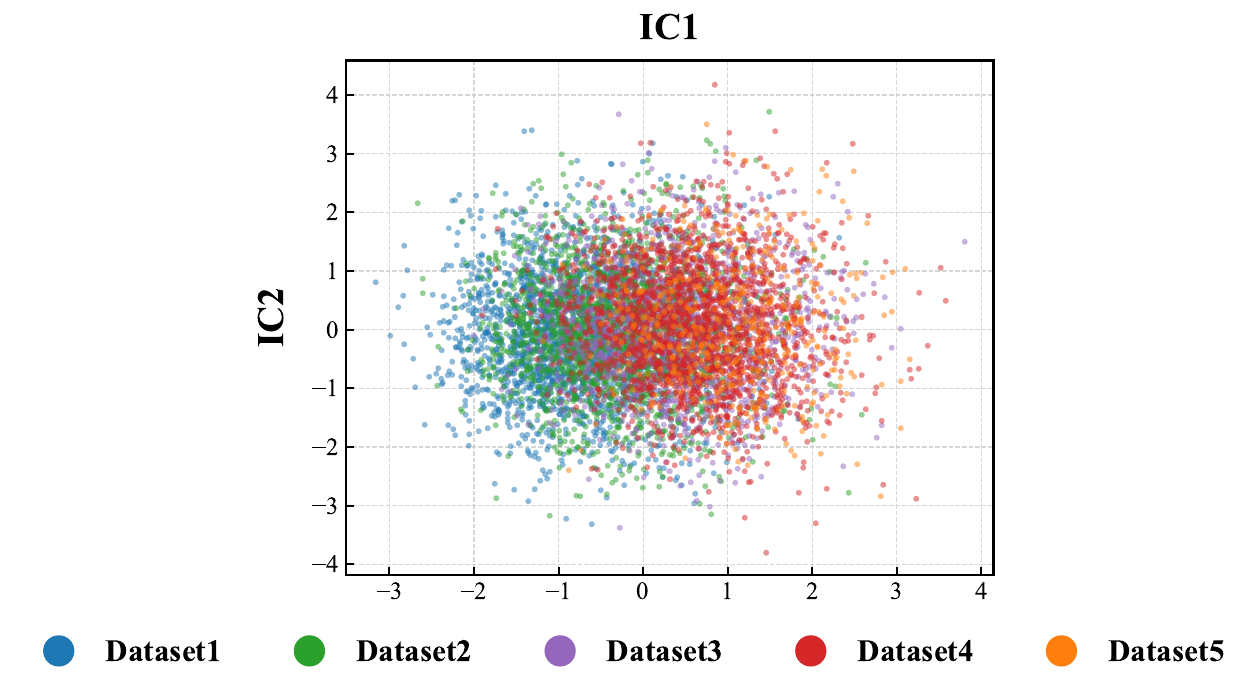}
  \caption{\textnormal{%
    FastICA projection of Facebook-Pod-B traffic onto IC1 and IC2,
    colour-coded by dataset split (three training sets and two
    validation sets).
    The majority of samples form a dense cluster near the origin,
    corresponding to the dominant low-magnitude traffic structure;
    a small number of scattered outliers represent sparse
    but off-center flows.}}
  \label{fig:fastica_clusters}
\end{figure}

\section{Empirical Evidence for Sparse Deviations around the Dominant Structure}
\label{app:sparse_deviation_evidence}

This appendix provides empirical evidence for the observation in the
main text that sparse deviations exist around a dominant traffic
structure. We examine the statistical behavior of the raw traffic data
from two complementary perspectives: low-dimensional projection and
distributional fitting. The results show that datacenter traffic is
dominated by a large number of small flows and a small number of
sparse but off-center flows.

We conduct a FastICA experiment on Facebook-Pod-B, projecting
traffic samples from all five dataset splits (three training and
two validation) jointly onto the first two independent components
(IC1 and IC2), colour-coded by split identity.

As shown in Figure~\ref{fig:fastica_clusters}, samples from all
five splits overlap in a compact elliptical cluster centred at the
origin, confirming that the dominant traffic structure is consistent
and stationary across splits.
A small fraction of points appear as sparse outliers well beyond the
central mass, corresponding to sparse but off-center flows that are large relative
to the bulk of OD entries within the same time frame but rare in
occurrence.
The cross-split consistency of both the central cluster and the
peripheral outliers indicates that this dominant-plus-sparse pattern
is a stable property of the dataset rather than an artifact of any
particular split.

Based on this observation, we adopt an additive decomposition in the
log domain:
\begin{equation}
    Z_t = U_t + O_t ,
    \label{eq:log_domain_decomposition}
\end{equation}
where $U_t$ denotes the dominant log-domain traffic structure and
$O_t$ denotes the local deviation. Since $O_t$ is expected to
satisfy the sparse pattern that most entries are close to zero while
only a few entries have large magnitudes, we impose a Laplace prior on
the deviation term:
\begin{equation}
    p(O_t) \propto \exp\left(-\lambda \|O_t\|_1\right).
    \label{eq:laplace_sparse_prior}
\end{equation}
This prior encourages sparsity through its sharp peak at zero, while
its heavy-tailed behavior allows a small number of large
deviations to be represented.

Figure~\ref{fig:lognormal_loglaplace_fit} shows marginal
distribution fits for two representative OD dimensions from the
Facebook-Pod-B dataset, within a locally stationary window of
$L = 200$ frames and $d = 56$ OD pairs.

\begin{figure}[t]
  \centering
  \includegraphics[width=\columnwidth]{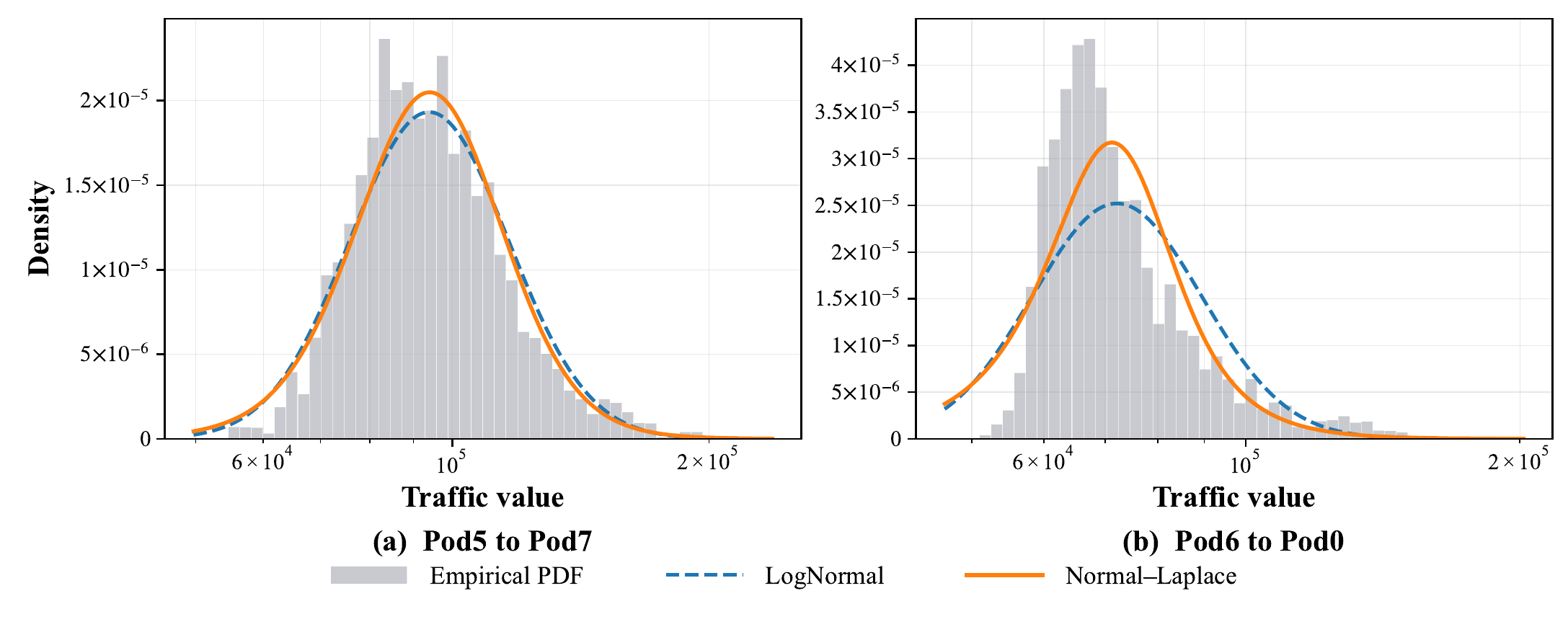}
  \caption{\textnormal{%
    Marginal distribution fitting on two representative OD
    dimensions from Facebook-Pod-B.
    (a)~Pod5$\to$Pod7: LogNormal and Normal--Laplace curves
    nearly coincide, indicating negligible sparse deviation.
    (b)~Pod6$\to$Pod0: the empirical distribution exhibits a
    sharper peak and a heavier left shift; Normal--Laplace
    provides a closer fit than LogNormal alone by capturing
    the sparse-deviation component.}}
  \label{fig:lognormal_loglaplace_fit}
\end{figure}

\section{Imputation Details}
\label{appendix:uncertainty}

This appendix derives, for each missing traffic entry, a $95\%$
imputation estimated interval that quantifies the uncertainty of
the imputed log-domain value under the estimated statistical model.
All derivations are conducted within a fixed local window $W_n$ and
a fixed time instant $t \in W_n$, conditioned on the estimated shared
parameters
$\hat{\theta}_n = (\hat{\mu}_n, \hat{\Sigma}_n, \hat{\lambda}_n)$
and the frame-level optimal deviation term $\hat{o}_t^{(\Omega_t)}$
obtained from the block coordinate descent in
Section~\ref{subsec:bcd_solver}.

For brevity, within this appendix we abbreviate
$\mathcal{O} := \Omega_t$ and
$\mathcal{M} := \Omega_t^{\mathrm{mis}}$;
these abbreviations are used only locally and do not redefine any
symbol of the main text.

\subsection{Conditional Gaussian Distribution of
             the Unobserved Principal Component}
\label{subsec:cond_gaussian}

Given the frame-level optimal deviation term
$\hat{o}_t^{(\mathcal{O})}$
returned by the deviation subproblem~\eqref{eq:ot_update_sec43},
the additive decomposition of the log-domain model
(cf.~\eqref{eq:pattern_decomposition}) constrains the principal
component on the observed positions to satisfy
\begin{equation}
  \hat{u}_t^{(\mathcal{O})}
  \;=\; z_t^{(\mathcal{O})} - \hat{o}_t^{(\mathcal{O})},
  \label{eq:u_obs}
\end{equation}
which we take as the conditioning value for the missing positions.

Under the model in~\eqref{eq:pattern_gaussian}, the principal
component satisfies
$U_t \sim \mathcal{N}(\hat{\mu}_n, \hat{\Sigma}_n)$
within $W_n$.
Applying the standard conditional distribution formula for
multivariate Gaussians~\cite{bishop2006pattern} to
\eqref{eq:u_obs} yields
\begin{equation}
U_t^{(\mathcal{M})}
  \;\Big|\; U_t^{(\mathcal{O})} = \hat{u}_t^{(\mathcal{O})}
  \;\sim\; \mathcal{N}\!\bigl(m_t,\, V_t\bigr),
  \label{eq:cond_gaussian}
\end{equation}
where the conditional mean
$m_t \in \mathbb{R}^{|\mathcal{M}|}$
and conditional covariance
$V_t \in \mathbb{R}^{|\mathcal{M}| \times |\mathcal{M}|}$
are
\begin{align}
  m_t
  &= \hat{\mu}_n^{(\mathcal{M})}
   + \hat{\Sigma}_n^{(\mathcal{M},\,\mathcal{O})}
     \bigl(\hat{\Sigma}_n^{(\mathcal{O},\,\mathcal{O})}\bigr)^{-1}
     \bigl(\hat{u}_t^{(\mathcal{O})} - \hat{\mu}_n^{(\mathcal{O})}\bigr),
  \label{eq:cond_mean}\\[4pt]
  V_t
  &= \hat{\Sigma}_n^{(\mathcal{M},\,\mathcal{M})}
   - \hat{\Sigma}_n^{(\mathcal{M},\,\mathcal{O})}
     \bigl(\hat{\Sigma}_n^{(\mathcal{O},\,\mathcal{O})}\bigr)^{-1}
     \hat{\Sigma}_n^{(\mathcal{O},\,\mathcal{M})}.
  \label{eq:cond_cov}
\end{align}

The subvectors and submatrices in~\eqref{eq:cond_mean}--\eqref{eq:cond_cov}
follow the notation of~\eqref{eq:Ut_obs_gaussian}, with
$\mathcal{O}$ and $\mathcal{M}$ substituted for $\Omega_t$ and
$\Omega_t^{\mathrm{mis}}$.

\subsection{Laplace Distribution for the Unobserved
             Deviation Component}
\label{subsec:laplace_prior}

With the distribution of the unobserved principal component
established in~\eqref{eq:cond_gaussian}, we next characterise the
distribution of the unobserved deviation component.

Under the model in~\eqref{eq:pattern_laplacian}, each entry of
$O_t$ follows a univariate density proportional to
$\exp(-\lambda_n |o|)$.
Comparing this with the standard parameterisation of
$\mathrm{Laplace}(0,b)$, whose density is
$\tfrac{1}{2b}\exp(-|x|/b)$, identifies the scale parameter as the
reciprocal of $\lambda_n$.
Substituting the estimate $\hat{\lambda}_n$ therefore gives
\begin{equation}
  b \;:=\; \frac{1}{\hat{\lambda}_n} \;>\; 0,
  \label{eq:def_b}
\end{equation}
so that the deviation component on the missing positions has the
marginal distribution
\begin{equation}
  O_t^{(\mathcal{M})}
  \;\sim\;
  \prod_{j \in \mathcal{M}}
    \mathrm{Laplace}(0,\, b).
  \label{eq:laplace_missing}
\end{equation}

Furthermore, since $U_t \perp O_t$ for each $t$ under the model in
Section~\ref{subsec:system_model}, this independence is preserved
under conditioning:
the conditional variable in~\eqref{eq:cond_gaussian} remains
independent of~\eqref{eq:laplace_missing}.

\subsection{Conditional estimated Density for
             Unobserved Positions}
\label{subsec:convolution_density}

Combining the distributions in~\eqref{eq:cond_gaussian}
and~\eqref{eq:laplace_missing} through the additive decomposition
in~\eqref{eq:pattern_decomposition}, the log-domain variable on the
missing positions satisfies
\begin{equation}
  Z_t^{(\mathcal{M})}
  \;=\; U_t^{(\mathcal{M})} + O_t^{(\mathcal{M})}.
  \label{eq:decomp_missing}
\end{equation}

By the independence established in
Section~\ref{subsec:laplace_prior}, the joint conditional estimated
density of $Z_t^{(\mathcal{M})}$ is the convolution of the Gaussian
in~\eqref{eq:cond_gaussian} and the product Laplace
in~\eqref{eq:laplace_missing}:
\begin{equation}
\begin{split}
  & p\!\left(z^{(\mathcal{M})}
    \;\Big|\; z_t^{(\mathcal{O})},\,
    \hat{o}_t^{(\mathcal{O})},\,\hat{\theta}_n\right) = \\
  & \qquad \int_{\mathbb{R}^{|\mathcal{M}|}}
    \mathcal{N}(u \mid m_t,\, V_t)
    \prod_{j \in \mathcal{M}}
      \frac{1}{2b}\,
      e^{-|z_j - u_j|/b}
    \,\mathrm{d}u,
\end{split}
\label{eq:joint_convolution}
\end{equation}
where $\mathcal{N}(u \mid m_t, V_t)$ denotes the Gaussian density
with mean $m_t$ and covariance $V_t$ evaluated at $u$.

\subsection{Per-Coordinate Marginal Distribution and
             Closed-Form CDF}
\label{subsec:marginal_cdf}

Since the $95\%$ interval is constructed separately for each missing
coordinate, we reduce~\eqref{eq:joint_convolution} to its
one-dimensional marginals.
For each $j \in \mathcal{M}$, the marginal at coordinate $j$
decomposes as
\begin{equation}
  Z_t(j) \;=\; U_t(j) + O_t(j),
  \label{eq:marginal_decomp}
\end{equation}
where the two components are independent and distributed as
\begin{equation}
\begin{split}
  U_t(j) \;\Big|\; U_t^{(\mathcal{O})} = \hat{u}_t^{(\mathcal{O})}
  &\;\sim\; \mathcal{N}(m_{t,j},\, s_{t,j}^2), \\
  O_t(j) &\;\sim\; \mathrm{Laplace}(0,\, b),
\end{split}
\label{eq:marginal_dists}
\end{equation}
with the marginal parameters defined as
\begin{equation}
  m_{t,j} \;:=\; (m_t)_j,
  \qquad
  s_{t,j}^2 \;:=\; (V_t)_{jj}.
  \label{eq:marginal_params}
\end{equation}

Consequently, $Z_t(j)$ follows the convolution of a Gaussian and a
Laplace distribution~\cite{reed2006normal}, with variance
\begin{equation}
  \mathrm{Var}\!\bigl(Z_t(j)\bigr)
  \;=\; s_{t,j}^2 + 2b^2,
  \label{eq:nl_variance}
\end{equation}
where we used the independence of $U_t(j)$ and $O_t(j)$ together with
the identity $\mathrm{Var}(\mathrm{Laplace}(0,b)) = 2b^2$.

\medskip

To present the closed-form CDF compactly, introduce the auxiliary
quantities
\begin{equation}
  \alpha_{t,j}(z) \;:=\; \frac{z - m_{t,j}}{s_{t,j}},
  \qquad
  \kappa_{t,j} \;:=\; \frac{s_{t,j}}{b}.
  \label{eq:alpha_kappa}
\end{equation}

\begin{proposition}[Closed-form CDF]
\label{prop:nl_cdf}
The CDF of $Z_t(j)$,
\begin{equation}
  F_{t,j}(z)
  \;:=\;
  P\!\left(Z_t(j) \leq z
    \;\Big|\; z_t^{(\mathcal{O})},\,\hat{o}_t^{(\mathcal{O})},\,
    \hat{\theta}_n\right),
  \label{eq:cdf_def}
\end{equation}
admits the closed form
\begin{align}
  F_{t,j}(z)
  &= \Phi\!\bigl(\alpha_{t,j}(z)\bigr)
  \notag\\[4pt]
  &\quad
  -\;\frac{1}{2}\exp\!\!\left(
    \frac{s_{t,j}^2}{2b^2} - \frac{z - m_{t,j}}{b}
  \right)
  \Phi\!\bigl(\alpha_{t,j}(z) - \kappa_{t,j}\bigr)
  \notag\\[4pt]
  &\quad
  +\;\frac{1}{2}\exp\!\!\left(
    \frac{s_{t,j}^2}{2b^2} + \frac{z - m_{t,j}}{b}
  \right)
  \Phi\!\bigl(-\alpha_{t,j}(z) - \kappa_{t,j}\bigr),
  \label{eq:nl_cdf}
\end{align}
where $\Phi(\cdot)$ is the standard normal CDF.
\end{proposition}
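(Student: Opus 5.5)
The plan is to compute $F_{t,j}(z)=P(U+O\le z)$ by conditioning on the Laplace component, where $U\sim\mathcal{N}(m,s^2)$ and $O\sim\mathrm{Laplace}(0,b)$ are independent, as established in \eqref{eq:marginal_dists}. Throughout we drop the indices $t,j$ and write $m=m_{t,j}$, $s=s_{t,j}$, $\alpha=\alpha_{t,j}(z)$ and $\kappa=\kappa_{t,j}$. By independence,
\begin{equation*}
F(z)=\int_{\mathbb{R}}\Phi\!\left(\frac{z-o-m}{s}\right)\frac{1}{2b}e^{-|o|/b}\,do .
\end{equation*}
We split this integral at $o=0$ into $I_+=\int_0^\infty$ and $I_-=\int_{-\infty}^0$, because $e^{-|o|/b}$ is a pure exponential on each half-line.

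On each half-line we integrate by parts, differentiating $\Phi$ and integrating the exponential. For $I_+$, the antiderivative of $\frac{1}{2b}e^{-o/b}$ is $-\frac12 e^{-o/b}$. The boundary terms give $\frac12\Phi(\alpha)$. The remaining integral is $-\frac{1}{2s}\int_0^\infty e^{-o/b}\varphi\!\left(\frac{z-o-m}{s}\right)do$. For $I_-$, the antiderivative is $\frac12 e^{o/b}$, which also yields $\frac12\Phi(\alpha)$ from the boundary, plus $+\frac{1}{2s}\int_{-\infty}^0 e^{o/b}\varphi(\cdot)\,do$. The two boundary contributions sum to the leading term $\Phi(\alpha)$ of \eqref{eq:nl_cdf}. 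It is worth checking the signs here carefully: $\partial_o\Phi$ carries a factor $-1/s$.

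The remaining two Gaussian–exponential integrals are evaluated by completing the square. Setting $w=(z-m-o)/s$ gives $o=z-m-sw$, so $e^{-o/b}=e^{-(z-m)/b}e^{\kappa w}$. Using
\begin{equation*}
e^{\kappa w}\varphi(w)=e^{\kappa^2/2}\varphi(w-\kappa),
\end{equation*}
the integration range $o\in(0,\infty)$ maps to $w\in(-\infty,\alpha)$. The $I_+$ remainder therefore becomes
\begin{equation*}
-\tfrac12\exp\!\left(\frac{s^2}{2b^2}-\frac{z-m}{b}\right)\Phi(\alpha-\kappa).
\end{equation*}
Similarly, in $I_-$ we have $e^{o/b}=e^{(z-m)/b}e^{-\kappa w}$, and the range $o\in(-\infty,0)$ maps to $w\in(\alpha,\infty)$. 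Applying the same identity with $-\kappa$ gives
\begin{equation*}
+\tfrac12\exp\!\left(\frac{s^2}{2b^2}+\frac{z-m}{b}\right)\bigl(1-\Phi(\alpha+\kappa)\bigr)=+\tfrac12\exp\!\left(\frac{s^2}{2b^2}+\frac{z-m}{b}\right)\Phi(-\alpha-\kappa).
\end{equation*}
Summing the three pieces gives exactly \eqref{eq:nl_cdf}.

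The main obstacle is bookkeeping rather than any conceptual step: keeping the signs straight in the integration by parts and the change of variables, and confirming that the boundary terms at $\pm\infty$ vanish. They do vanish, because $\Phi$ is bounded and the exponential weights decay. As sanity checks, we would verify that $F(z)\to0$ and $F(z)\to1$ as $z\to\mp\infty$, using Mills-ratio asymptotics for the products $e^{\pm(z-m)/b}\Phi(\cdot)$. We would also check the limit $b\to0$, where the expression should reduce to $\Phi(\alpha)$.
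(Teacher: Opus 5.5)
Your proposal is correct and follows the paper's proof essentially step for step: condition on the Laplace component, split the integral at $o=0$, integrate by parts on each half-line, and evaluate the Gaussian--exponential remainders by completing the square. Your shift identity $e^{\kappa w}\varphi(w)=e^{\kappa^2/2}\varphi(w-\kappa)$ is the same computation as the paper's completed square in $o$. When you write it up, state the Jacobian $|do/dw|=s$ explicitly, since it is what cancels your $1/(2s)$ prefactor; the paper's intermediate display for $J_1$ omits this factor $s_{t,j}$, although its final expression for $I_1$ is correct.
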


\begin{proof}
By the total probability formula and the Laplace density of
$O_t(j)$,
\begin{equation}
  F_{t,j}(z)
  = \int_{-\infty}^{+\infty}
      \Phi\!\!\left(\frac{z - o - m_{t,j}}{s_{t,j}}\right)
      \frac{1}{2b}\,e^{-|o|/b}\,\mathrm{d}o.
  \label{eq:cdf_integral}
\end{equation}

Split the integration domain at $o = 0$:
$F_{t,j}(z) = I_1(z) + I_2(z)$,
where $I_1$ and $I_2$ denote the integrals over $[0,+\infty)$ and
$(-\infty, 0)$, respectively.

\smallskip
\noindent\textit{Evaluation of $I_1$.}\;
Applying integration by parts with
$u_1 = \Phi\!\bigl(\alpha_{t,j}(z) - o/s_{t,j}\bigr)$
and $\mathrm{d}v_1 = e^{-o/b}\,\mathrm{d}o$,
\begin{equation}
\begin{split}
  2b\,I_1(z)
  &= \int_0^{+\infty}
       \Phi\!\bigl(\alpha_{t,j}(z) - o/s_{t,j}\bigr)\,
       e^{-o/b}\,\mathrm{d}o \\
  &= b\,\Phi\!\bigl(\alpha_{t,j}(z)\bigr)
     - \frac{b}{s_{t,j}}\, J_1(z),
\end{split}
\label{eq:I1_ibp}
\end{equation}
where
\begin{equation}
  J_1(z) \;:=\; \int_0^{+\infty}
    \phi\!\bigl(\alpha_{t,j}(z) - o/s_{t,j}\bigr)\,
    e^{-o/b}\,\mathrm{d}o
  \label{eq:J1_def}
\end{equation}
and $\phi(\cdot)$ denotes the standard normal density.

Completing the square in the combined exponent of $J_1(z)$ gives
\begin{equation}
\begin{split}
  & -\tfrac{1}{2}\bigl(\alpha_{t,j}(z) - o/s_{t,j}\bigr)^2 - \tfrac{o}{b} \\
  &\qquad =\;
  -\frac{\bigl[o - s_{t,j}\bigl(\alpha_{t,j}(z) - \kappa_{t,j}\bigr)\bigr]^2}
        {2 s_{t,j}^2}
  + \frac{s_{t,j}^2}{2b^2}
  - \frac{z - m_{t,j}}{b}.
\end{split}
\label{eq:sq_I1}
\end{equation}

Substituting
$w = \bigl[o - s_{t,j}(\alpha_{t,j}(z) - \kappa_{t,j})\bigr]/s_{t,j}$
so that $o = 0$ corresponds to
$w = \kappa_{t,j} - \alpha_{t,j}(z)$,
and noting that the new lower limit yields the standard normal tail
probability,
\begin{equation}
  J_1(z)
  \;=\;
  \exp\!\!\left(\frac{s_{t,j}^2}{2b^2} - \frac{z - m_{t,j}}{b}\right)
  \Phi\!\bigl(\alpha_{t,j}(z) - \kappa_{t,j}\bigr).
  \label{eq:J1_eval}
\end{equation}

Substituting \eqref{eq:J1_eval} into \eqref{eq:I1_ibp} and
simplifying,
\begin{equation}
\begin{split}
  I_1(z)
  &= \tfrac{1}{2}\,\Phi\!\bigl(\alpha_{t,j}(z)\bigr) \\
  &\quad
  -\;\tfrac{1}{2}
    \exp\!\!\left(\frac{s_{t,j}^2}{2b^2} - \frac{z - m_{t,j}}{b}\right)
    \Phi\!\bigl(\alpha_{t,j}(z) - \kappa_{t,j}\bigr).
\end{split}
\label{eq:I1}
\end{equation}

\smallskip
\noindent\textit{Evaluation of $I_2$.}\;
Substituting $o \mapsto -o$ maps $I_2$ to an integral over
$[0,+\infty)$ whose integrand differs from that of $I_1$ only in
the sign of $o/s_{t,j}$.
Applying the same integration-by-parts, completing-the-square, and
substitution procedure as for $I_1$, with this sign reversed
throughout, yields
\begin{equation}
\begin{split}
  I_2(z)
  &= \tfrac{1}{2}\,\Phi\!\bigl(\alpha_{t,j}(z)\bigr) \\
  &\quad
  +\;\tfrac{1}{2}
    \exp\!\!\left(\frac{s_{t,j}^2}{2b^2} + \frac{z - m_{t,j}}{b}\right)
    \Phi\!\bigl(-\alpha_{t,j}(z) - \kappa_{t,j}\bigr).
\end{split}
\label{eq:I2}
\end{equation}

Summing \eqref{eq:I1} and \eqref{eq:I2} yields \eqref{eq:nl_cdf}.

\end{proof}

\subsection{Imputation Estimated Interval}
\label{subsec:estimated_interval}

Equipped with the closed-form marginal CDF $F_{t,j}$ from
Proposition~\ref{prop:nl_cdf}, we now construct a $97.5\%$ interval
for each missing coordinate in the log domain and map it to the
original traffic domain.

The marginal distributions in~\eqref{eq:marginal_dists} are each
symmetric about their respective centres.
By the independence of $U_t(j)$ and $O_t(j)$, the sum
$Z_t(j) - m_{t,j}$ is therefore also symmetric about zero.

Consequently, the $97.5\%$ interval in the log domain, defined as the
interval between the $1.25\%$ and $98.75\%$ quantiles of $Z_t(j)$,
takes the symmetric form
\begin{equation}
  \mathrm{PI}^{(z)}_{0.975}(t,j)
  \;=\;
  \bigl[m_{t,j} - \delta_{t,j},\; m_{t,j} + \delta_{t,j}\bigr],
  \label{eq:log_pi}
\end{equation}
where $\delta_{t,j} > 0$ is the unique solution to
\begin{equation}
  F_{t,j}\!\left(m_{t,j} + \delta_{t,j}\right) \;=\; 0.9875.
  \label{eq:delta_eq}
\end{equation}

By the strictly increasing transformation
in~\eqref{eq:inverse_transform}, quantile ordering is preserved, and
the $97.5\%$ imputation estimated interval in the original traffic
domain is
\begin{equation}
  \mathrm{PI}^{(x)}_{0.975}(t,j)
  \;=\;
  \Bigl[
    e^{m_{t,j} - \delta_{t,j}} - \varepsilon_j,\;\;
    e^{m_{t,j} + \delta_{t,j}} - \varepsilon_j
  \Bigr],
  \label{eq:original_pi}
\end{equation}
where $\varepsilon_j$ is the $j$-th component of the offset vector
$\varepsilon$ in~\eqref{eq:Zt_def}.

Equation~\eqref{eq:original_pi} serves as the per-entry uncertainty
output of Utimac and is evaluated in
Section~\ref{sec:experiments}.

\begin{figure}[htbp]
  \centering
  \includegraphics[width=\columnwidth]{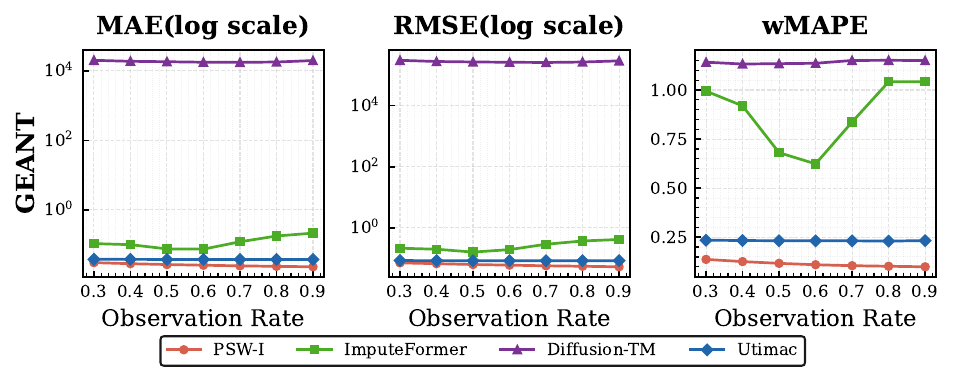}
  \caption{\textnormal{Overall MAE, RMSE, and wMAPE vs.\
    $p_{\mathrm{obs}}$ on GÉANT.}}
  \label{fig:geant_overall}
\end{figure}

\begin{figure}[htbp]
  \centering
  \includegraphics[width=\columnwidth]{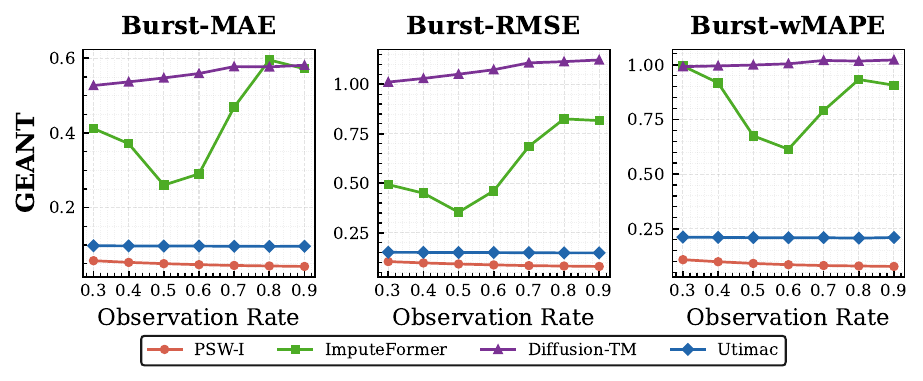}
  \caption{\textnormal{Burst-MAE, Burst-RMSE, and Burst-wMAPE
    vs.\ $p_{\mathrm{obs}}$ on GÉANT.}}
  \label{fig:geant_burst}
\end{figure}

%
 
\section{Dataset Statistics}
\label{app:datasets}
 
Table~\ref{tab:datasets} summarises the three datasets used in our
experiments.
GÉANT~\cite{uhlig2006providing} is a wide-area network (WAN) dataset
from the pan-European GÉANT backbone.
Facebook-Pod-B, and Facebook-ToR-A are
datacenter network (DCN) datasets from Facebook's production
infrastructure, representing pod-level and
rack-level traffic respectively.
 
\begin{table}[h]
  \centering
  \caption{Dataset statistics ($^\dagger$WAN; $^\ddagger$DCN).}
  \label{tab:datasets}
  \setlength{\tabcolsep}{4pt}
  \renewcommand{\arraystretch}{1.15}
  \begin{tabular}{lrrrr}
    \toprule
    \textbf{Dataset} & $N$ & \textbf{Links} & \textbf{OD flows} & $d$ \\
    \midrule
    GÉANT$^\dagger$        & 22  & 72      & 462      & 484      \\
    Facebook-Pod-B$^\ddagger$  &  8  & 56      &  56      &  64      \\
    Facebook-ToR-A$^\ddagger$  & 155 & 7{,}194 & 23{,}870 & 24{,}025 \\
    \bottomrule
  \end{tabular}
\end{table}
 
\section{Evaluation Metric Definitions}
\label{app:metrics}
 
 
 
All metrics are computed exclusively on missing positions
$\Omega^{\mathrm{miss}}_t = \{k \mid B_t(k){=}0\}$, where $B_t$
is the observation mask at time $t$.
Let $\hat{x}_t(k)$ and $x_t(k)$ denote the imputed and true
traffic values at position $k$ and time $t$, and let
$\varepsilon{>}0$ be a small constant preventing division by zero.
 
\subsection*{Overall Imputation Accuracy}
 
\begin{align}
  \mathrm{MAE}
  &= \frac{\displaystyle\sum_t\sum_{k\in\Omega^{\mathrm{miss}}_t}
           |\hat{x}_t(k)-x_t(k)|}
          {\displaystyle\sum_t|\Omega^{\mathrm{miss}}_t|} ,
  \label{eq:mae} \\[4pt]
  \mathrm{RMSE}
  &= \sqrt{\frac{\displaystyle\sum_t\sum_{k\in\Omega^{\mathrm{miss}}_t}
                 (\hat{x}_t(k)-x_t(k))^2}
                {\displaystyle\sum_t|\Omega^{\mathrm{miss}}_t|}} ,
  \label{eq:rmse} \\[4pt]
  \mathrm{wMAPE}
  &= \frac{\displaystyle\sum_t\sum_{k\in\Omega^{\mathrm{miss}}_t}
           |\hat{x}_t(k)-x_t(k)|}
          {\displaystyle\sum_t\sum_{k\in\Omega^{\mathrm{miss}}_t}
           |x_t(k)|{+}\varepsilon} .
  \label{eq:wmape}
\end{align}
wMAPE is preferred over per-entry MAPE because it aggregates
numerator and denominator separately, making it robust to near-zero
flows common in heavy-tailed traffic distributions.

\subsection*{Burst Flow Definition}
 
A flow $k$ at time $t$ is defined as a \emph{burst flow} if its
dominance ratio $r_t(k;\alpha) \ge \beta$, where
\begin{equation}
  r_t(k;\alpha) = \frac{1}{d{-}1}
  \sum_{\ell \ne k}
  \mathbf{1}\!\bigl[x_t(k) \ge \alpha\,x_t(\ell)\bigr] ,
  \label{eq:burst_def}
\end{equation}
with dominance multiplier $\alpha{>}1$ and majority threshold
$\beta{\in}(0,1)$.
Let $\mathcal{B}^{(\alpha,\beta)}_t$ denote the burst flow set at
time $t$, and define the missing burst set as
$\mathcal{B}^{\mathrm{miss}}_t =
\mathcal{B}^{(\alpha,\beta)}_t \cap \Omega^{\mathrm{miss}}_t$,
with predicted counterpart $\hat{\mathcal{B}}^{\mathrm{miss}}_t$
obtained by applying the same definition to $\hat{x}_t$.

\subsection*{Burst Flow Detection}
 
\begin{align}
  \mathrm{Precision}
  &= \frac{\displaystyle\sum_t
           |\hat{\mathcal{B}}^{\mathrm{miss}}_t
             \cap \mathcal{B}^{\mathrm{miss}}_t|}
          {\displaystyle\sum_t
           |\hat{\mathcal{B}}^{\mathrm{miss}}_t|
           {+}\,\varepsilon} ,
  \label{eq:precision} \\[4pt]
  \mathrm{Recall}
  &= \frac{\displaystyle\sum_t
           |\hat{\mathcal{B}}^{\mathrm{miss}}_t
             \cap \mathcal{B}^{\mathrm{miss}}_t|}
          {\displaystyle\sum_t
           |\mathcal{B}^{\mathrm{miss}}_t|
           {+}\,\varepsilon} ,
  \label{eq:recall} \\[4pt]
  \mathrm{F1}
  &= \frac{2\cdot\mathrm{Precision}\cdot\mathrm{Recall}}
          {\mathrm{Precision}+\mathrm{Recall}+\varepsilon} .
  \label{eq:f1}
\end{align}

Recall and F1 are the primary indicators: Recall measures coverage
of true burst flows, while F1 balances false alarms and missed
detections.

\subsection*{Burst Flow Imputation Accuracy}
 
Evaluated only on true missing burst positions
$\mathcal{B}^{\mathrm{miss}}_t$:
\begin{align}
  \mathrm{Burst\text{-}MAE}
  &= \frac{\displaystyle\sum_t\sum_{k\in\mathcal{B}^{\mathrm{miss}}_t}
           |\hat{x}_t(k)-x_t(k)|}
          {\displaystyle\sum_t|\mathcal{B}^{\mathrm{miss}}_t|} ,
  \label{eq:burst_mae} \\[4pt]
  \mathrm{Burst\text{-}RMSE}
  &= \sqrt{\frac{\displaystyle\sum_t
                 \sum_{k\in\mathcal{B}^{\mathrm{miss}}_t}
                 (\hat{x}_t(k)-x_t(k))^2}
                {\displaystyle\sum_t
                 |\mathcal{B}^{\mathrm{miss}}_t|}} ,
  \label{eq:burst_rmse} \\[4pt]
  \mathrm{Burst\text{-}wMAPE}
  &= \frac{\displaystyle\sum_t
           \sum_{k\in\mathcal{B}^{\mathrm{miss}}_t}
           |\hat{x}_t(k)-x_t(k)|}
          {\displaystyle\sum_t
           \sum_{k\in\mathcal{B}^{\mathrm{miss}}_t}
           |x_t(k)|{+}\varepsilon} .
  \label{eq:burst_wmape}
\end{align}

Burst-wMAPE is the headline indicator as it directly reflects
relative recovery quality on high-volume critical entries.


\section{Extended Experimental Results}
\label{app:extended}

This appendix presents two sets of supplementary results omitted
from the main text for space: (i) complete imputation performance
on the GÉANT WAN dataset with the same three figures as
Section~\ref{subsec:results}, and (ii) burst flow detection
metrics (Precision, Recall, F1) for all three datasets.

\subsection{GÉANT Results}
\label{app:geant}

Figures~\ref{fig:geant_overall}--\ref{fig:geant_burst} report
the same three figures as in the main text, now for the GÉANT
WAN backbone dataset ($N{=}22$, $d{=}484$, 15-min intervals).


\begin{figure}[t]
  \centering
  \includegraphics[width=\columnwidth]{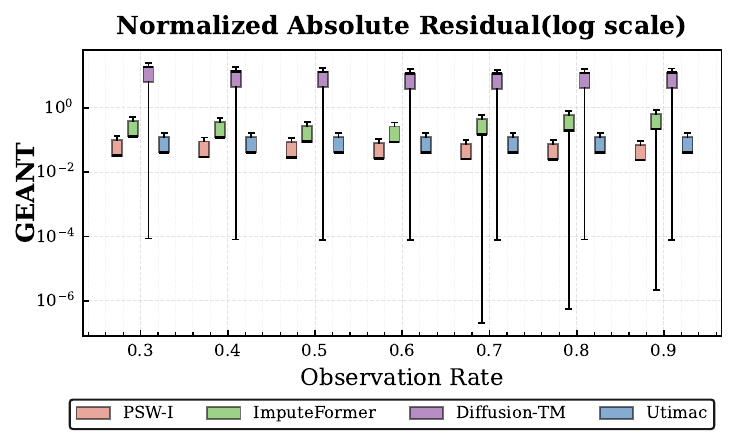}
  \caption{\textnormal{Normalized absolute residual distributions
    (5th--95th percentile) vs.\ $p_{\mathrm{obs}}$ on GÉANT.}}
  \label{fig:geant_residual}
\end{figure}


On GÉANT, the ranking of methods differs from the DCN setting in
an instructive way.
For overall imputation, PSW-I achieves the lowest MAE across the
full observation range (e.g., $0.030$ at $p_{\mathrm{obs}}{=}0.3$),
while Utimac ranks second with MAE $= 0.037$, staying well below
ImputeFormer (up to $0.211$ at $p_{\mathrm{obs}}{=}0.9$) and far
below Diffusion-TM, whose MAE collapses to values exceeding
$10^4$ due to numerical instability in the per-dimension
min-max normalization applied to WAN-scale data.
In terms of wMAPE, Utimac achieves $0.235$ at
$p_{\mathrm{obs}}{=}0.3$ and remains stable around $0.231$--$0.235$
across the full range, while ImputeFormer degrades from $0.994$ to
$1.042$ and Diffusion-TM stays above $1.13$ throughout.
These results demonstrate that Utimac generalizes well beyond the
DCN setting: even on a WAN backbone with qualitatively different
traffic characteristics, it produces stable and competitive
completions, trailing only PSW-I (an alignment-based method
that does not require traffic-specific structural assumptions)
while substantially outperforming the two neural baselines.

The residual distribution in Figure~\ref{fig:geant_residual}
reinforces this picture: Utimac's 95th-percentile residual
($0.164$ at $p_{\mathrm{obs}}{=}0.3$) is close to PSW-I
($0.131$) and far below Diffusion-TM, whose extreme tail
residuals (p95 up to $24.97$) confirm the numerical instability
visible in Figure~\ref{fig:geant_overall}.

For burst flow imputation on GÉANT
(Figure~\ref{fig:geant_burst}), Utimac achieves a stable
Burst-wMAPE around $0.209$--$0.212$ across all observation rates,
ranking second behind PSW-I ($0.077$--$0.108$) and substantially
ahead of ImputeFormer ($0.674$--$0.994$) and Diffusion-TM
($0.991$--$1.022$).
The gap to PSW-I on GÉANT reflects the particular advantage of
alignment-based methods on low-dimensional WAN traffic matrices
($d{=}484$) where distributional proximity can be computed
accurately with small batches, an advantage that diminishes as
dimensionality grows.

\subsection{Burst Flow Detection}
\label{app:burst_detection}

Tables~\ref{tab:burst_tora} and~\ref{tab:burst_geant} report
Precision, Recall, and F1 for burst flow detection
($\alpha{=}2.0$, $\beta{=}0.8$) on Facebook-ToR-A and GÉANT.
Burst flow detection differs from burst flow imputation
(Section~\ref{subsec:results}): detection measures whether
a method correctly identifies which flows are bursty based
on its imputed values, independent of the numerical recovery
error.

\begin{table}[htbp]
  \centering
  \caption{Burst flow detection on Facebook-ToR-A.
    Precision ${\approx}1.000$ for all methods (see text).$^{\dagger}$}
  \label{tab:burst_tora}
  \setlength{\tabcolsep}{4pt}
  \renewcommand{\arraystretch}{1.15}
  \begin{tabular}{lccccccc}
    \toprule
    & \multicolumn{7}{c}{Observation rate $p_{\mathrm{obs}}$} \\
    \cmidrule(lr){2-8}
    Method & 0.3 & 0.4 & 0.5 & 0.6 & 0.7 & 0.8 & 0.9 \\
    \midrule
    \multicolumn{8}{l}{\textit{Recall}} \\
    PSW-I         & .202 & .199 & .201 & .217 & .257 & .345 & .474 \\
    ImputeFormer  & .149 & .160 & .168 & .174 & .178 & .182 & .185 \\
    Diffusion-TM  & .196 & .205 & .217 & .234 & .260 & .304 & .437 \\
    Utimac        & .151 & .157 & .167 & .186 & .220 & .282 & \textbf{1.000} \\
    \midrule
    \multicolumn{8}{l}{\textit{F1}} \\
    PSW-I         & .337 & .332 & .334 & .357 & .409 & .513 & .643 \\
    ImputeFormer  & .259 & .276 & .287 & .296 & .302 & .307 & .312 \\
    Diffusion-TM  & .328 & .340 & .356 & .379 & .412 & .466 & .608 \\
    Utimac        & .262 & .271 & .287 & .314 & .361 & .440 & \textbf{1.000} \\
    \bottomrule
  \end{tabular}
  \smallskip\\
  {\footnotesize $^{\dagger}$Precision\,${\approx}1.000$ for all methods due to extreme burst-flow
  sparsity at $d{=}24{,}025$; Recall and F1 are the sole discriminative indicators.}
\end{table}

\begin{table}[htbp]
  \centering
  \caption{PICP$_{97.5}$ (\%) of Utimac vs.\ observation rate
    $p_{\mathrm{obs}}$ on Facebook-Pod-B and Facebook-ToR-A.
    Values close to 97.5 indicate estimated
    intervals.}
  \label{tab:picp975}
  \setlength{\tabcolsep}{4pt}
  \renewcommand{\arraystretch}{1.15}
  \begin{tabular}{lccccccc}
    \toprule
    & \multicolumn{7}{c}{Observation rate $p_{\mathrm{obs}}$} \\
    \cmidrule(lr){2-8}
    Dataset & 0.3 & 0.4 & 0.5 & 0.6 & 0.7 & 0.8 & 0.9 \\
    \midrule
    Facebook-Pod-B  & .968 & .951 & .955 & .958 & .959 & .966 & .973 \\
    Facebook-ToR-A  & .970 & .972 & .973 & .974 & .975 & .975 & .976 \\
    \bottomrule
  \end{tabular}
\end{table}
\label{app:picp}

\begin{table}[htbp]
  \centering
  \caption{Burst flow detection on GÉANT.}
  \label{tab:burst_geant}
  \setlength{\tabcolsep}{3.5pt}
  \renewcommand{\arraystretch}{1.15}
  \begin{tabular}{lcccccccc}
    \toprule
    & \multicolumn{7}{c}{Observation rate $p_{\mathrm{obs}}$} \\
    \cmidrule(lr){2-8}
    Method & 0.3 & 0.4 & 0.5 & 0.6 & 0.7 & 0.8 & 0.9 \\
    \midrule
    \multicolumn{8}{l}{\textit{Precision}} \\
    PSW-I         & .924 & .930 & .934 & .938 & .942 & .944 & .948 \\
    ImputeFormer  & .606 & .639 & .694 & .787 & .883 & .919 & .923 \\
    Diffusion-TM  & .583 & .592 & .590 & .587 & .583 & .580 & .586 \\
    Utimac        & \textbf{.920} & \textbf{.923} & \textbf{.928} & \textbf{.930} & \textbf{.932} & \textbf{.935} & \textbf{.934} \\
    \midrule
    \multicolumn{8}{l}{\textit{Recall}} \\
    PSW-I         & .921 & .924 & .928 & .933 & .935 & .936 & .941 \\
    ImputeFormer  & .851 & .905 & .927 & .925 & .901 & .867 & .802 \\
    Diffusion-TM  & .544 & .543 & .534 & .524 & .519 & .515 & .523 \\
    Utimac        & \textbf{.894} & \textbf{.891} & \textbf{.889} & \textbf{.886} & \textbf{.883} & \textbf{.880} & \textbf{.875} \\
    \midrule
    \multicolumn{8}{l}{\textit{F1}} \\
    PSW-I         & .922 & .927 & .931 & .935 & .939 & .940 & .945 \\
    ImputeFormer  & .708 & .749 & .794 & .851 & .892 & .892 & .858 \\
    Diffusion-TM  & .563 & .566 & .561 & .554 & .549 & .546 & .553 \\
    Utimac        & \textbf{.907} & \textbf{.907} & \textbf{.908} & \textbf{.907} & \textbf{.907} & \textbf{.907} & \textbf{.903} \\
    \bottomrule
  \end{tabular}
\end{table}

On \textbf{Facebook-ToR-A} (Table~\ref{tab:burst_tora}),
all methods obtain Precision ${\approx} 1.000$, a statistical
artifact of the dataset's extreme sparsity ($d{=}24{,}025$): at
this dimensionality, burst flows constitute a negligibly small
fraction of all non-zero entries, so virtually all predicted burst
positions are genuine.
Recall is therefore the sole discriminative indicator.
At $p_{\mathrm{obs}}{=}0.9$, Utimac's Recall jumps to $\mathbf{1.000}$
while PSW-I reaches $0.474$ and Diffusion-TM $0.437$, revealing
that Utimac's inference framework can recover the full burst-flow
structure when sufficient observations are available even in a
high-dimensional sparse setting.

On \textbf{GÉANT} (Table~\ref{tab:burst_geant}), burst detection
is considerably easier than on the DCN datasets due to the
lower dimensionality.
PSW-I leads with F1 ${\approx} 0.92$--$0.94$, followed closely by
Utimac at F1 ${\approx} 0.90$--$0.91$, a gap of only $1.7$--$4.4\%$.
Diffusion-TM achieves the weakest detection performance with
F1 $\approx 0.55$, consistent with its numerical instability
on GÉANT observed in Section~\ref{app:geant}.

\subsection{Estimated Interval Coverage}

Beyond point estimates, Utimac derives closed-form estimated
intervals for every missing entry from the analytically
characterised Gaussian--Laplace posterior
(Section~\ref{sec:method}).
We evaluate calibration via the \textbf{Prediction Interval
Coverage Probability at the 97.5\% nominal level}
(PICP$_{97.5}$), defined as the fraction of true missing values
that fall within the predicted 97.5\% interval.
A well-calibrated method should achieve
PICP$_{97.5} \approx 97.5\%$ regardless of the observation rate.

Table~\ref{tab:picp975} reports PICP$_{97.5}$ on Facebook-Pod-B
and Facebook-ToR-A across all observation rates.
On Facebook-ToR-A, coverage remains tightly around $97.5\%$
across the full range of $p_{\mathrm{obs}}$, confirming that the
posterior-derived intervals faithfully reflect the true uncertainty
of the imputed values.
On Facebook-Pod-B, coverage increases monotonically from $96.8\%$
at $p_{\mathrm{obs}}{=}0.3$ to $97.3\%$ at $p_{\mathrm{obs}}{=}0.9$:
at very low observation rates the posterior uncertainty is larger
and the interval width grows accordingly, so the nominal level is
approached from below as more observations constrain the inference.
This behaviour is consistent with the theoretical guarantee that
the interval endpoints are computed from the exact marginal
covariance; the convergence to the nominal level as
$p_{\mathrm{obs}}$ increases provides direct empirical support for
the correctness of the closed-form estimated distribution.

\end{document}